\def\margin{2.9cm}
\theoremstyle{plain}
\newtheorem{theorem}{Theorem}
\newtheorem{definition}{Definition}[section]
\newtheorem{lemma}{Lemma}[section]
\newtheorem{corollary}{Corollary}[section]
\newtheorem{obs}{Observation}[section]
\newtheorem{claim}{Claim}[section]
\newtheorem{preprocess rule}{Preprocessing Rule}[section]
\newtheorem{reduction rule}{Reduction Rule}[section]
\newtheorem{branching rule}{Branching Rule}[section]
\title{Tractability of {\Konig} Edge Deletion Problems\footnote{This work was done while the first author was affiliated to The Institute of Mathematical Sciences, HBNI, Chennai, India.}}
\date{}
\author[1]{Diptapriyo Majumdar}
\author[2]{Rian Neogi}
\author[2]{Venkatesh Raman}
\author[3]{S. Vaishali}
\affil[1]{Royal Holloway, University of London, Egham, United Kingdom\\
    \texttt{\{diptapriyo.majumdar@rhul.ac.uk\}}}
\affil[2]{The Institute of Mathematical Sciences, HBNI, Chennai, India\\
  \texttt{\{rianneogi|vraman\}@imsc.res.in}}
\affil[3]{PSG College of Technology, Coimbatore, India \texttt{\{svaishali.psg@gmail.com\}}}
\newcommand{\NPComp}{NP-Complete}
\newcommand{\Wonehard}{W[1]-hard}
\newcommand{\KED}{{\sc K{\"{o}}nig Edge Deletion}}
\newcommand{\Konig}{{K{\"{o}}nig}}
\newcommand{\vc}{{\sf vc}}
\newcommand{\defproblem}[3]{
  \vspace{3mm}
\noindent\fbox{
  \begin{minipage}{.95\textwidth}
  \begin{tabular*}{\textwidth}{@{\extracolsep{\fill}}lr} #1  \\ \end{tabular*}
  {\bf{Input:}} #2  \\
  {\bf{Question:}} #3
  \end{minipage}
  }
  \vspace{2mm}
}
\begin{document}

\maketitle

\begin{abstract}
A graph is said to be a {\Konig} graph if the size of its maximum matching is equal to the size of its minimum vertex cover.
The {\KED} problem asks if in a given graph there exists a set of at most $k$ edges whose deletion results in a {\Konig} graph.
While the vertex version of the problem ({\Konig} vertex deletion) has been shown to be fixed-parameter tractable 
more than a decade ago, the fixed-parameter-tractability of the {\KED} problem has been open since then, 
and has been conjectured to be {\Wonehard} in several papers.
In this paper, we settle the conjecture by proving it {\Wonehard}.
We prove that a variant of this problem, where we are given a graph $G$ and a maximum matching $M$ and we want a $k$-sized {\Konig} edge deletion set that is disjoint from $M$, is fixed-parameter-tractable.
\end{abstract}

\section{Introduction}
A vertex cover in a graph is a subset of vertices such that it has at least one end point of every edge. A graph is {\Konig}-Egervary ({\Konig} in short) if its maximum matching size is equal to its minimum vertex cover size. 
The name {\Konig} comes from {\Konig}'s Theorem which states that the size of the minimum vertex cover is equal to the size of the maximum matching in bipartite graphs. It follows from {\Konig}'s Theorem that bipartite graphs are {\Konig}, but it is known that the converse is not true.

{\Konig} graphs, that form a super-class of bipartite graphs, have been well studied from structural and algorithmic perspectives~\cite{Stersoul1979ACO,Deming79}.
Given a graph, the problem of finding the minimum number of vertices or edges to be deleted so that the resulting graph satisfies a property, belongs to the class of vertex or edge deletion problems respectively. The vertex version of these problems include {\sc Vertex Cover} (where the resulting graph is an edgeless graph), {\sc Feedback Vertex Set} (where the resulting graph is a forest) and {\sc Odd Cycle Transversal} (where the resulting graph is bipartite). These problems are {\NPComp} by a celebrated result of Lewis and Yannanakis when the resulting graph class is a hereditary graph class (that is, it is closed under induced subgraphs)~\cite{LewisY80}. The parameterized complexity of these deletion problems have also been studied~\cite{cai1996fixed,lokshtanov2008wheel}.

Another natural version of these problems is to find a maximum induced subgraph satisfying the property.
There is a dichotomy theorem known in parameterized complexity for this version when the property is a hereditary property~\cite{KhotR02}. The problem has also been studied in the related notion of kernelization~\cite{KPRR14}.

The class of {\Konig} graphs, however, does not form a hereditary class and so the above results do not apply to it.
Approximation and fixed-parameter algorithms for deletion to or maximum induced {\Konig} graph were studied 
in~\cite{MRSSS11}. It was shown that it is fixed-parameter tractable (FPT) to determine whether a given graph has at most $k$ vertices whose deletion results in a {\Konig} graph. The running time of the algorithm was later 
improved~\cite{LNRRS14}.

The problem of finding a set of $k$ vertices that induce a subgraph that is {\Konig} has been proven to be {\Wonehard} while the related problem of finding a set of $k$ edges that induce a {\Konig} subgraph has been shown to be fixed-parameter-tractable. 
See ~\cite{konigkernel} for a recent result improving the FPT algorithm and 
the approximation ratio for the edge version, along with results on some `above-guarantee' version of the problem.

The parameterized complexity of the problem of deleting $k$ edges to obtain a {\Konig} graph has been open for 11 years~\cite{fpt_school} and is conjectured to be {\Wonehard}~\cite{MRSSS11}.
In this paper, we prove the conjecture positively, showing that $k$-edge deletion to {\Konig} graph is indeed {\Wonehard}.
We also show that a variant of the problem is fixed-parameter-tractable.

\section{Preliminaries}

Given a graph $G$, by $V(G)$, we denote the set of vertices of $G$, and by $E(G)$ we denote the set of edges. We represent an edge by a subset of $V(G)$ of size two. 
A matching is a set of edges that are pairwise disjoint. A vertex $v \in V(G)$ is said be an endpoint of an edge $e \in E(G)$ if $v \in e$. A vertex $v \in V(G)$ is said to be saturated by a matching $M$ if there exists an edge $e \in M$ such that $v$ is an endpoint of $e$, otherwise $v$ is said to be unsaturated by $M$. A set $S \subseteq V(G)$ is said to be saturated by a matching $M$ if $M$ saturates every vertex in $S$.
A vertex cover of a graph is a set of vertices whose removal from the graph results in an edgeless graph. By $\mu(G)$ we denote the size of a maximum matching of the graph and by ${\vc}(G)$ we denote the size of a minimum vertex cover of the graph.

\begin{definition}
A graph $G$ is said to be {\Konig} if $\mu(G) = {\vc}(G)$.
\end{definition}

Given two subsets of vertices $A, B \subseteq V(G)$, by $(A,B)$, we denote the set of edges that have one endpoint in $A$ and the other in $B$. Let $F \subseteq E(G)$, by $G[F]$ we denote the graph with vertex set $V(G)$ and edge set $F$. A set $F \subseteq E(G)$ such that $G[E(G) \setminus F]$ is {\Konig} is called a {\Konig} edge deletion set of $G$. 

The main focus of this paper is the following problem:

\defproblem{\KED}{An undirected graph $G = (V,E)$ and an integer $k$}{Does there exist $F \subseteq E$ such that $|F| \leq k$ and $G[E \setminus F]$ is a {\Konig} graph?}

\subsection{Parameterized Complexity} A parameterized problem $\Pi$ is a subset of 
$ \Gamma \times N$ where $\Gamma$ is a finite alphabet. An instance of a parameterized problem is a tuple
$(x, k)$, where $x$ is a classical problem instance, and $k$ is called the parameter.

A central notion in parameterized complexity is {\it fixed-parameter tractability} (FPT). 
A parameterized problem $\Pi$ is in FPT if there is an algorithm that
takes an instance $(x, k)$ and decides if $(x, k) \in \Pi$ in time $f(k) |x|^{O(1)}$.
Here, $f$ is an arbitrary function of $k$. Such an algorithm is called a fixed-parameter
algorithm (in short, an FPT algorithm).

There is an associated hardness theory with the notion of a {\it parameterized reduction} and a hierarchy of complexity classes. 
For two parameterized problems $\Pi _1$ and $\Pi _2$, a parameterized reduction from $\Pi _1$ to $\Pi _2$ is an algorithm 
that takes an input $(x,k)$ of $\Pi _1$ and converts in $f(k) |x|^{O(1)}$ time into an instance $(y, k')$ of 
$\Pi_2$ such that $(x,k) \in \Pi _1$ if and only if $(y, k') \in \Pi _2$. Here $f$ is some function of $k$ and $k'$ is 
$g(k)$ for some function $g$ of $k$.

For our purpose, it suffices to know that the independent set problem, which asks whether a given undirected graph has an independent set of size $k$, is a canonical complete problem for the class $W[1]$. Here $k$ is the parameter associated with the problem. See ~\cite{cygan2015parameterized} for more details on parameterized complexity.

\subsection{Vertex Cover Linear Program}
\label{NT}
The vertex cover problem can be expressed as an integer linear program where an optimum solution of the integer linear program would correspond to a minimum vertex cover. The variables of the linear program take values in $\{0,1\}$. However, we can relax this constraint on the variables and allow them to instead take on any real value in the interval $[0,1]$. This is called the LP relaxation of the vertex cover problem (also known as the fractional vertex cover problem)~\cite{lovasz2009matching}. It is defined as follows.

\[{\vc}_f(G) = min\{1^{\top} y :  \forall\{u,v\} \in E(G), y_u + y_v \geq 1,  \text{ and } \forall v \in V(G), y_v \in [0,1]\}\]

While finding the minimum vertex cover problem is NP-hard, finding the optimum solution for the fractional version can be done 
in polynomial time.

The dual of the vertex cover linear program is the LP relaxation of the maximum matching problem and is given as follows.

\[\mu_f(G) = max\{1^\top x : \forall v \in V(G) x(\delta(v)) \leq 1, \text{ and } \forall e \in E(G), x_e \in [0,1]\}\]

where $\delta(v)$ denotes the set of edges incident to a vertex $v$ and for $F \subseteq E$, $x(F) = \sum\limits_{e \in F} x_e$.

As the vertex cover linear program is a relaxation of a minimization problem, the optimal solution to the vertex cover linear program is always less than or equal to the size of the minimum vertex cover. Similarly, the fractional maximum matching size is always greater than or equal to the size of the maximum matching. Since $\mu_f$ and $vc_f$ are duals of each other, their optimal values are the same. Hence we have the following inequality.

\[\mu(G) \leq \mu_f(G) = {\vc}_f(G) \leq {\vc}(G)\]

For {\Konig} graphs, since $\mu(G) = {\vc}(G)$, the above inequalities collapse into a single value.

The fractional versions of these problems have been widely studied~\cite{NT75,lovasz2009matching,pulleyblank1987fractional,balas1981integer,nemhauser1974properties,trotter1973solution,picard1977integer}. 
For the vertex cover LP, it is known that there always exists an optimum solution where every variable takes values in $\{0,\frac{1}{2},1\}$~\cite{nemhauser1974properties,trotter1973solution}.
For an optimum solution to the vertex cover LP, let $S_1$ and $S_0$ be the set of vertices that get assigned $1$ and $0$ respectively. One may observe that there is a matching across the edges $(S_0,S_1)$ that saturates $S_1$. Suppose not, then by Hall's Theorem, there exists a subset $A \subseteq S_1$ such that $|N(A) \cap S_0| < |A|$. Now we may construct a smaller feasible solution by setting the vertices in $A \cup (N(A) \cap S_0)$ to $\frac{1}{2}$ and keeping the value of the other vertices the same. It is easy to see that this is feasible and has smaller value.
It is also known that we can find an LP solution that minimizes the number of vertices assigned $\frac{1}{2}$ in polynomial time. The following theorem summarizes these statements.

\begin{theorem} [Nemhauser-Trotter Theorem] \label{thm:nem_trotter} \cite{NT75,picard1977integer} 
Given a graph $G$, there exists a polynomial time algorithm that computes an optimum solution to the LP of Vertex Cover of $G$ such that
\begin{enumerate}
\item The values assigned by the solution are in $\{0,\frac{1}{2},1\}$. Let the set of vertices that get values $0, \frac{1}{2}, 1$ be $S_0, S_{\frac{1}{2}}$ and $S_1$ respectively.
\item The cardinality of $S_{\frac{1}{2}}$ is the minimum among all such solutions.
\item There exists a matching between $S_0$ and $S_1$ that saturates $S_1$.
\end{enumerate}
\end{theorem}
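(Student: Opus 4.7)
The plan is to prove this classical theorem via the \emph{bipartite double cover} reduction. Define $H$ by taking two copies $v_1, v_2$ of each $v \in V(G)$ and replacing every edge $\{u,v\}$ of $G$ with the pair of edges $\{u_1, v_2\}$ and $\{v_1, u_2\}$. Then $H$ is bipartite, so a minimum vertex cover $C^\ast$ of $H$ can be computed in polynomial time via a maximum matching, invoking \Konig's theorem on $H$. Given such $C^\ast$, define $y_v = \tfrac12 |\{v_1, v_2\} \cap C^\ast|$; this assigns only values in $\{0, \tfrac12, 1\}$, and it is feasible for the vertex cover LP on $G$, since every edge $\{u,v\} \in E(G)$ induces two $H$-edges both of which $C^\ast$ must cover, forcing $y_u + y_v \geq 1$. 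The objective value is $\tfrac12 |C^\ast|$.

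To see that $y$ is optimal, I lift in the reverse direction: any half-integral feasible LP solution of $G$ gives rise to a vertex cover of $H$ of size exactly twice its LP value, by placing both $v_1,v_2$ in the cover when $y_v=1$ and exactly one of the two (arbitrarily) when $y_v=\tfrac12$. Combined with the standard fact that the vertex cover LP admits a half-integral optimum, this forces $\tfrac12 |C^\ast| = \vc_f(G)$, establishing item~(1).

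Item~(3) follows from the Hall-type argument already sketched in the text preceding the theorem: if no matching across $(S_0, S_1)$ saturated $S_1$, there would be some $A \subseteq S_1$ with $|N(A) \cap S_0| < |A|$, and reassigning both $A$ and $N(A) \cap S_0$ to $\tfrac12$ would strictly decrease the objective, contradicting optimality. Such a matching can in fact be read off directly from the maximum matching of $H$ used to compute $C^\ast$, by tracing, for each $v_1 \in C^\ast$ whose partner $v_2$ is not in $C^\ast$, the edge incident to $v_1$ in the matching.

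The main obstacle is item~(2): producing an optimum that minimizes $|S_{\frac12}|$. My plan is to exploit the well-known lattice structure of minimum vertex covers of the bipartite graph $H$: a vertex $v$ gets value $\tfrac12$ precisely when exactly one of $v_1, v_2$ is in $C^\ast$, so the goal is to choose a minimum vertex cover of $H$ that maximizes the number of $v$ for which $v_1$ and $v_2$ have the same status. I would iterate the following local improvement step: for each $v$ with $y_v = \tfrac12$, check via an augmenting-path search in the residual of the maximum matching of $H$ whether it is possible to simultaneously shift both $v_1$ and $v_2$ into or out of $C^\ast$ without increasing $|C^\ast|$, and if so perform the shift. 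Each such swap strictly decreases $|S_{\frac12}|$, can be tested in polynomial time, and since $|S_{\frac12}|$ is a nonnegative integer, the procedure terminates in polynomially many iterations with a half-integral LP optimum of minimum support.
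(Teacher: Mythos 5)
The paper does not actually prove this statement: it is imported as a classical result with citations to Nemhauser--Trotter and Picard--Queyranne, and only the Hall-type argument for item~(3) is sketched in the surrounding text. So you are attempting a full proof where the paper offers none, and your treatment of items~(1) and~(3) via the bipartite double cover is the standard route and essentially sound. One small slip in item~(1): when lifting a half-integral solution of $G$ to a vertex cover of $H$, you cannot choose ``exactly one of the two copies (arbitrarily)'' for a vertex with $y_v=\tfrac12$. If $y_u=y_v=\tfrac12$ for an edge $\{u,v\}$ and you happen to pick $u_1$ and $v_2$, the edge $\{v_1,u_2\}$ is left uncovered. You must pick consistently (say, always the first copy $v_1$); with that fix the size bound $2\cdot\mathrm{val}(y)$ and the optimality argument go through.

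The genuine gap is in item~(2). Your local-improvement scheme asserts that ``each such swap strictly decreases $|S_{\frac12}|$,'' but a shift that moves both $v_1$ and $v_2$ into or out of $C^\ast$ while preserving $|C^\ast|$ generally forces other vertices of $H$ to change status along the alternating structure, and those changes can split some other pair $u_1,u_2$ that previously agreed, so $|S_{\frac12}|$ need not decrease monotonically and termination is not established. Worse, even if the procedure terminates, you have not argued that a locally unimprovable solution globally minimizes $|S_{\frac12}|$: the heart of this item is a persistency/combination statement --- if some optimal solution has $y_v$ integral and another has $y_u$ integral, there is a single optimal solution integral at both simultaneously --- and that is exactly what a greedy sequence of independent local tests does not deliver. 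The classical proofs obtain this from the lattice structure of minimum vertex covers of the bipartite graph $H$ (equivalently, from the Dulmage--Mendelsohn decomposition, taking a canonical extreme cover), which you invoke by name but never actually use. To close the gap you would need either that canonical-cover argument, or a lemma showing that two optimal half-integral solutions $y,y'$ can be merged into one whose set of half-valued vertices is contained in $S_{\frac12}(y)\cap S_{\frac12}(y')$.
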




\subsection{Properties of {\Konig} Graphs}

We note the following observations that follow from the definition of {\Konig} graphs.

\begin{obs} \label{obs:konig_mm}
	Given a {\Konig} graph $G$, a maximum matching $M$ of $G$ and a minimum vertex cover $S$ of $G$, the following holds:
	\begin{enumerate}
	\item For any edge $e \in M$, we have $|e \cap S| = 1$.
	That is, for any edge in $M$, exactly one of its endpoints is in $S$, and

	\item Let $U$ be the set of unsaturated vertices of $M$, then $S \cap U = \emptyset$.
	\end{enumerate}
\end{obs}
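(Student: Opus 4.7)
The plan is to derive both parts from the defining equality $\mu(G) = \vc(G)$, together with the basic fact that a vertex cover must hit every edge and a matching is a set of pairwise disjoint edges. Let $|M| = |S| = t$ (using $\mu(G) = \vc(G)$).

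For part (1), I would first note that $S$ is a vertex cover, so every edge $e \in M$ satisfies $|e \cap S| \geq 1$. The key observation is then a simple counting/double-counting argument: since the $t$ edges of $M$ are pairwise disjoint, the multiset of endpoints from $M$ that lie in $S$ consists of $\sum_{e \in M} |e \cap S|$ distinct vertices of $S$. Hence
\[
|S| \;\geq\; \sum_{e \in M} |e \cap S| \;\geq\; |M| \;=\; t \;=\; |S|.
\]
Equality must hold throughout, forcing $|e \cap S| = 1$ for every $e \in M$.

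For part (2), I would simply leverage part (1). The above equalities additionally show that every vertex of $S$ appears as an endpoint of some edge of $M$, i.e.\ $S$ is entirely contained in the set of vertices saturated by $M$. Since $U$ is by definition the complement (within $V(G)$) of the saturated set, we conclude $S \cap U = \emptyset$.

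Neither step looks like a real obstacle; the whole observation is essentially the statement that when $\mu(G) = \vc(G)$, the vertex-cover-hits-every-edge inequality is tight edge by edge for a maximum matching. The only thing to be careful about is to justify why the endpoints counted in the sum above are distinct, which is immediate from $M$ being a matching. I would therefore present the proof as a single short paragraph establishing the displayed chain of inequalities, and then deducing (2) as an immediate corollary.
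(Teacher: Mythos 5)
Your proof is correct. The paper states this observation without an explicit proof (it is presented as following directly from the definition of {\Konig} graphs), and your double-counting argument --- using that the edges of $M$ are pairwise disjoint so that $|S| \geq \sum_{e \in M} |e \cap S| \geq |M| = |S|$ forces equality edge by edge and shows $M$ saturates $S$ --- is precisely the standard justification the authors implicitly rely on.
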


From the above observation it follows that for every minimum vertex cover $S$ and maximum matching $M$ of a {\Konig} graph $G$, $M$ lies across the edges $(S,V(G) \setminus S)$ and $M$ saturates $S$. Moreover, if there exists a maximum matching $M$ and minimum vertex cover $S$ such that $M$ saturates $S$ and $M$ lies across $(S,V(G)\setminus S)$, then $|M| = |S|$ and hence $G$ must be {\Konig}. So we have the following characterization of {\Konig} graphs.

\begin{lemma} \cite{Deming79,Stersoul1979ACO} \label{lem:konig_characterization}
A graph $G$ is {\Konig} if and only if for every minimum vertex cover $S$ of $G$, there exists a matching across $(S, V(G) \setminus S)$ that saturates $S$. 
\end{lemma}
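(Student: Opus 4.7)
The plan is to prove the two directions separately, using the observation that immediately precedes the lemma as the main tool for the forward direction and a short counting/duality argument for the reverse direction.

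For the forward direction, I will assume $G$ is {\Konig} and let $S$ be an arbitrary minimum vertex cover. Fix any maximum matching $M$ of $G$. By \Cref{obs:konig_mm}(1), every edge of $M$ has exactly one endpoint in $S$, so $M \subseteq (S, V(G)\setminus S)$ as a set of edges. By \Cref{obs:konig_mm}(2), no vertex of $S$ is an $M$-unsaturated vertex, so every vertex of $S$ is incident to some edge of $M$; together with the fact that each edge of $M$ contains exactly one vertex of $S$, this forces $|M| = |S|$ and $M$ to saturate $S$. Hence $M$ itself is a matching across $(S, V(G)\setminus S)$ that saturates $S$, which is exactly what is required.

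For the reverse direction, I will assume that for every minimum vertex cover $S$ of $G$, there exists a matching $M_S$ across $(S, V(G)\setminus S)$ that saturates $S$. Fix any minimum vertex cover $S$ of $G$ and the corresponding matching $M_S$. Since each edge of $M_S$ contains exactly one vertex of $S$ and $M_S$ saturates $S$, we get $|M_S| = |S| = {\vc}(G)$. Consequently $\mu(G) \geq |M_S| = {\vc}(G)$. Combining this with the well-known weak duality $\mu(G) \leq {\vc}(G)$ (which also follows from the LP chain recorded in \Cref{NT}), we conclude $\mu(G) = {\vc}(G)$, i.e., $G$ is {\Konig}.

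There is essentially no real obstacle here; the substantive content has already been established in \Cref{obs:konig_mm}. The only small subtlety worth being careful about is ensuring, in the reverse direction, that the assumed matching across $(S, V(G)\setminus S)$ really has size exactly $|S|$ (and not merely at least $|S|$), which follows because a matching can contain at most one edge incident to each vertex of $S$ while being required to cover every vertex of $S$.
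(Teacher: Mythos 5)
Your proof is correct and follows essentially the same route as the paper: the forward direction is read off from Observation~\ref{obs:konig_mm} (every $M$-edge meets $S$ exactly once and $S$ has no $M$-unsaturated vertices), and the reverse direction is the counting argument that a matching across $(S,V(G)\setminus S)$ saturating $S$ has size exactly $|S|={\vc}(G)$, which with $\mu(G)\leq{\vc}(G)$ forces equality. The only (harmless) difference is that you apply the counting to an arbitrary saturating matching rather than assuming it is maximum, which matches the lemma statement slightly more faithfully.
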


As for any vertex cover $S$ and a matching $M$ of $G$, $|S| \geq |M|$, we also have the following equivalent characterization
using the fact that if there is a matching $M$ and a vertex cover $S$ such that $|M|=|S|$, then $M$ must be a maximum matching and $S$ must be a minimum vertex cover.
\begin{lemma} \cite{MRSSS11} \label{lem:konig_partition}
A graph $G=(V,E)$ is {\Konig} if and only if for some vertex cover $S$ of $G$, there exists a matching across $(S, V(G) \setminus S)$ that saturates $S$.
\end{lemma}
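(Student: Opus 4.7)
The plan is to prove the two implications separately; both are short, since the lemma is essentially a quantifier-weakening of Lemma~\ref{lem:konig_characterization}.

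For the forward direction, assume $G$ is \Konig. Since every graph has at least one minimum vertex cover, pick one and call it $S$. Applying Lemma~\ref{lem:konig_characterization} to $S$ directly yields a matching across $(S, V(G)\setminus S)$ that saturates $S$, so this $S$ witnesses the existential statement.

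For the reverse direction, suppose $S$ is a vertex cover of $G$ and $M$ is a matching across $(S, V(G)\setminus S)$ that saturates $S$. The key is a cardinality argument: because $M$ lies across the cut, every edge of $M$ has exactly one endpoint in $S$, so distinct edges of $M$ contribute distinct vertices to $S$; combined with $M$ saturating $S$, the map $e \mapsto e \cap S$ is a bijection between $M$ and $S$, giving $|M| = |S|$. I would then invoke the standard inequality $|M'| \le |S'|$ that holds for every matching $M'$ and every vertex cover $S'$ of $G$: applied to a maximum matching together with $S$ it gives $\mu(G) \le |S|$, and applied to $M$ together with a minimum vertex cover it gives $|M| \le {\vc}(G)$. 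Combined with $|M| = |S|$, $\mu(G) \ge |M|$, and ${\vc}(G) \le |S|$, the four quantities $\mu(G)$, $|M|$, $|S|$, ${\vc}(G)$ all coincide, so $\mu(G) = {\vc}(G)$ and $G$ is \Konig.

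No serious obstacle is anticipated; the only content is the bijection between $M$ and $S$ coming from the "across-the-cut + saturates" hypothesis, which the sentence preceding the lemma already flags.
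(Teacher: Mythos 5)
Your proof is correct and matches the paper's own (informal) justification: the paper derives this lemma from the remark that a matching $M$ and vertex cover $S$ with $|M|=|S|$ force $M$ to be maximum and $S$ to be minimum, which is exactly your cardinality/bijection argument, and the forward direction is the same quantifier-weakening of Lemma~\ref{lem:konig_characterization}. No gaps.
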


In passing, we observe the following interesting property of {\Konig} graphs that we couldn't find in literature. 
Though we don't use it in the rest of our paper, we feel that this can be of independent interest.
\begin{lemma} \label{lem:konig_alternate}
A graph $G$ is {\Konig} if and only if ${\vc}_f(G) = {\vc}(G)$, that is, its vertex cover number is equal to its fractional vertex cover number.
\end{lemma}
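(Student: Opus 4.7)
The forward direction is immediate: if $G$ is König then $\mu(G) = \vc(G)$, and the chain $\mu(G) \leq \mu_f(G) = \vc_f(G) \leq \vc(G)$ collapses, giving $\vc_f(G) = \vc(G)$. For the converse, the plan is to assume $\vc_f(G) = \vc(G)$ and invoke Theorem~\ref{thm:nem_trotter} to obtain a Nemhauser--Trotter partition $(S_0, S_{1/2}, S_1)$ satisfying $|S_1| + |S_{1/2}|/2 = \vc_f(G) = \vc(G)$, together with a matching $M_1$ across $(S_0, S_1)$ saturating $S_1$. König-ness of $G$ will then follow once a perfect matching $M_2$ of $G[S_{1/2}]$ is produced: $M_1 \cup M_2$ is a matching of $G$ of size $|S_1| + |S_{1/2}|/2 = \vc(G)$, whence $\mu(G) = \vc(G)$.

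The first step inside this reduction is to pin down $\vc_f(G[S_{1/2}]) = \vc(G[S_{1/2}]) = |S_{1/2}|/2$. The upper bound on $\vc_f$ comes from the all-$\tfrac{1}{2}$ assignment; for the matching lower bound I would use that NT tightness forbids any edge inside $S_0$ or between $S_0$ and $S_{1/2}$ (either would violate the LP constraints under the NT assignment), so a strictly cheaper fractional cover on $S_{1/2}$ could be lifted with $S_0 \to 0$ and $S_1 \to 1$ to a feasible LP solution for $G$ of value below $\vc_f(G)$, contradicting optimality. An analogous swap-out argument, using the NT matching across $(S_0,S_1)$ to replace any minimum vertex cover's contribution in $S_0$ by $S_1 \setminus C^*$, shows that every minimum vertex cover can be taken to contain $S_1$ and sit inside $S_1 \cup S_{1/2}$, giving $\vc(G[S_{1/2}]) = \vc(G) - |S_1| = |S_{1/2}|/2$; in particular $|S_{1/2}|$ is even. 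By LP duality, $\mu_f(G[S_{1/2}]) = |S_{1/2}|/2$, so $G[S_{1/2}]$ admits a perfect fractional matching.

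To upgrade this to an integral perfect matching, the plan is to appeal to the classical half-integrality of extreme points of the fractional matching polytope (Balinski): at an optimal extreme point the support consists of weight-$1$ matching edges together with vertex-disjoint odd cycles carrying weight $\tfrac{1}{2}$ on every edge, and every vertex is saturated. If such an extreme point had $r \geq 1$ odd cycles of lengths $|C_i|$, then each $C_i$ alone would force $(|C_i|+1)/2$ vertices into any integral vertex cover, yielding $\vc(G[S_{1/2}]) \geq |S_{1/2}|/2 + r/2$; integrality of $\vc$ then forces $\vc(G[S_{1/2}]) \geq |S_{1/2}|/2 + 1$, contradicting the previous paragraph. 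Hence $r = 0$, and the matching part of the extreme point is already a perfect integral matching of $G[S_{1/2}]$, closing the argument.

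The main technical obstacle I anticipate is the appeal to the odd-cycle structure of extreme half-integral fractional matchings; this is a standard fact in matching theory but deserves a clean citation (or a short self-contained uncrossing argument that rules out even cycles and paths in the support of an extreme point). A lesser but necessary care is the LP-decoupling across $S_{1/2}$ from $S_0 \cup S_1$, which rests on the NT constraint analysis; the rest is bookkeeping around the inequality chain $\mu \leq \mu_f = \vc_f \leq \vc$.
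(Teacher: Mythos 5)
Your proof is correct, but the converse direction takes a genuinely different and substantially heavier route than the paper's. The paper's argument is short: since $\vc_f(G)=\vc(G)$, the indicator vector of a minimum vertex cover is an \emph{integral} optimal LP solution, so by the minimality clause of Theorem~\ref{thm:nem_trotter} (property 2) the computed Nemhauser--Trotter solution has $S_{1/2}=\emptyset$; then $S_1$ is a vertex cover, property 3 gives a matching across $(S_1,S_0)$ saturating it, and Lemma~\ref{lem:konig_partition} finishes. You never invoke property 2; instead you keep $S_{1/2}$ alive and manufacture a perfect matching of $G[S_{1/2}]$ by (i) decoupling the LP across the NT partition to get $\vc_f(G[S_{1/2}])=\vc(G[S_{1/2}])=|S_{1/2}|/2$, (ii) passing to the dual to get a perfect fractional matching, and (iii) appealing to Balinski's half-integral odd-cycle structure of extreme points of the fractional matching polytope to rule out odd cycles via a vertex-cover counting argument. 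All of these steps check out (the edge-exclusion claims for $S_0$ follow from LP feasibility, the swap arguments are sound, and the $+r/2$ excess from $r$ odd cycles combined with integrality of $\vc$ does force $r=0$), so the argument is valid modulo the citation you flag. What the paper's route buys is brevity and self-containedness given the NT theorem as stated; what yours buys is independence from the minimality clause --- it works with any half-integral optimal LP solution --- at the cost of importing the extreme-point characterization of the fractional matching polytope, which is exactly the machinery the minimality clause lets the paper avoid.
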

\begin{proof}
If $G$ is {\Konig}, then ${\vc}_f(G) = {\vc}(G)$ as observed in Subsection~\ref{NT}. To prove the converse, find the optimum solution to the vertex cover linear program (that is, the fractional vertex cover) of $G$ that satisfies the properties in 
Theorem~\ref{thm:nem_trotter}. 

By definition, $S_0$ induces an independent set.  
As ${\vc}_f(G) = {\vc}(G)$, by the second property of the theorem, $S_{1/2} = \emptyset$ as there exists an optimum solution to the vertex cover linear program which is integral. Hence $(S_1, S_0)$ is a partition of the vertex set and $S_1$ is a vertex cover. By the third property of the theorem, there exists a matching across $(S_1, S_0)$ that saturates $S_1$. Hence the lemma follows from Lemma~\ref{lem:konig_partition}.
\end{proof}

\section{W[1]-hardness of {\Konig} Edge Deletion}



In this section, we will prove the following theorem:

\begin{theorem}
\label{thm:whardness-ked}
{\KED} is {\Wonehard}.
\end{theorem}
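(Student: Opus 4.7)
The plan is to establish W[1]-hardness via a parameterized reduction from a canonical W[1]-hard problem, most naturally Multicolored Clique or Multicolored Independent Set on an input graph $H = (V_1 \cup \cdots \cup V_k, E_H)$. Given such an instance, I would construct an equivalent instance $(G, k')$ of \KED\ with $k' = f(k)$ for some function $f$ (likely $O(k^2)$), such that the original instance is a YES-instance iff $G$ has a König edge deletion set of size at most $k'$. The key structural handle is Lemma~\ref{lem:konig_partition}: $G - F$ is König iff there is a vertex cover $S$ of $G-F$ and a matching across $(S, V(G)\setminus S)$ saturating $S$. This recasts König-ness as a ``quasi-bipartition plus saturating matching'' certificate, which the gadgets must preserve or destroy depending on which edges are deleted.

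The reduction would use two families of gadgets. First, a \emph{selection gadget} $G_i$ for each color class $V_i$: a small graph containing one designated ``choice edge'' $e_{i,v}$ per $v \in V_i$, engineered so that making $G_i$ König requires deleting exactly one choice edge; a natural template is a chain or cycle of triangles sharing vertices in a way that induces an inherent König deficiency of exactly one that can be repaired only at a choice edge. Second, for each pair $\{u,v\} \in E_H$ (or each non-edge, depending on the source problem) a \emph{validator gadget} joined to $e_{i,u}$ and $e_{j,v}$ that injects additional deficiency unless the deletions are consistent with the source constraint (i.e., both choice edges are deleted in the Clique reduction, or not both in the Independent Set reduction). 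The budget $k'$ is then set to the sum of selection and validation costs for a correct global selection, so that every edge in any solution is ``accounted for'' by some gadget.

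The main obstacle, and the likely reason the problem has resisted a FPT algorithm, is that König-ness is a \emph{global} property: $\mu(G-F)$ and $\vc(G-F)$ depend on the whole graph, so a locally planned deletion set may fail globally due to cross-gadget matching interference or joint vertex cover savings. To tame this, I would surround the selection/validation gadgets with large ``stabiliser'' substructures (e.g., perfect matching padding, or forced half-integral Nemhauser--Trotter blocks) that pin down where matching and cover contributions can come from, making global $\mu$ and $\vc$ cleanly decompose over the gadgets. The most delicate step will be the reverse direction: showing that \emph{any} König edge deletion set of size $\leq k'$ must correspond to a valid selection, ruling out ``clever'' deletions that bypass the intended structure. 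I expect this to require an exchange argument, possibly invoking Lemma~\ref{lem:konig_alternate} to argue in terms of the LP relaxation---any solution $F$ yields $\vc_f(G-F) = \vc(G-F)$, which through the Nemhauser--Trotter decomposition constrains $F$ tightly enough that it must contain one choice edge per class and respect every validator constraint.
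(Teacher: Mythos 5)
Your proposal is a strategy outline rather than a proof: no gadget is actually constructed, and the three places where all the difficulty lives are explicitly deferred. The ``selection gadget'' is described only as a hoped-for template (``a chain or cycle of triangles\ldots\ that induces an inherent K\"onig deficiency of exactly one'') without exhibiting any such graph or verifying that it behaves as claimed; the ``validator gadget'' that is supposed to ``inject additional deficiency unless the deletions are consistent'' is not specified at all; and the reverse direction --- ruling out deletion sets that bypass the intended structure --- is acknowledged as the most delicate step and left to an unspecified exchange argument backed by unspecified ``stabiliser'' substructures. Since K\"onig-ness is a global property (as you yourself note), the correctness of any such gadget scheme is precisely the part that must be proved, and nothing in the proposal does so. As written, the argument cannot be checked or completed without doing essentially the entire construction from scratch.

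For comparison, the paper's reduction is far more direct and avoids the multicolored-gadget machinery entirely. It reduces from plain \INDSET{} with the parameter preserved exactly ($k'=k$): attach a pendant vertex $p_i$ to each vertex $v_i$ of the input graph, and add a set $C$ of $2k$ mutually nonadjacent vertices, each joined to all of $V\cup P$. The universal vertices force every minimum vertex cover $S$ of the post-deletion graph to contain all of $C$ and to have size at most $n+k$ (otherwise no matching across $(S,V'\setminus S)$ can saturate $S$), while the surviving pendant edges force $S$ to contain every $v_i$ whose pendant edge was not deleted. These two bounds squeeze the budget to exactly $k$ and force the deleted edges to be precisely the pendant edges of $k$ vertices forming an independent set. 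The roles you assign to separate selection, validation, and stabiliser gadgets are all played simultaneously by $C$ together with the pendant vertices, and the global-interference problem you worry about is resolved by a single cardinality argument on $|S|$ rather than by decomposing $\mu$ and $\vc$ over gadgets.
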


\noindent\textbf{The Reduction.} We will reduce from the parameterized independent set problem which is known to be {\Wonehard}.

Let $G=(V,E)$ be a graph with vertex set $V = \{v_1,v_2,...,v_n\}$ in which we want to find a $k$-sized independent set. Let, without loss of generality, $k < n/2$ (for otherwise $n \leq 2k$ and the 
independent set instance can be trivially solved in fixed-parameter-tractable time).
We will construct $G'$ as follows:

Initially, $G'$ is a copy of $G$. For every vertex $v_i \in V$, we add a pendant vertex $p_i$ to $G'$ such that $(v_i,p_i)$ is an edge. Let $P$ be the set of all such pendant vertices $p_i$. 


Add a set of $2k$ vertices $C = \{c_1,...,c_{2k}\}$ to $G'$ such that each $c_i$ is adjacent to every vertex in $V \cup P$ and $G'[C]$ is an edgeless graph.

This ends the construction. Formally, $G'=(V',E')$ is given as follows: \[V' = V \cup P \cup C\] and 
\[E' = E \cup \{(v_i,p_i) \mid 1 \leq i \leq n\} \cup \{(c,v),(c,p) \mid c \in C, v \in V, p \in P\}\]

Note that $G'$ has $n' = 2n + 2k$ vertices. See Figure~\ref{fig:reduction}.

We first prove the following claims about {\Konig} edge deletion sets of $G'$.
\begin{claim}
\label{claim:konig_vc_upper}
Let $F$ be a {\Konig} edge deletion set of $G'$ and let $S$ be a minimum vertex cover of $G'[E' \setminus F]$, 
then $|S| \leq n+k$.
\end{claim}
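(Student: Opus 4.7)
The plan is simple: since $G'[E'\setminus F]$ is required to be König, its minimum vertex cover number equals its maximum matching number, and the latter is trivially bounded by half the vertex count of $G'$.

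First I would observe that edge deletion leaves the vertex set untouched, so $G'[E'\setminus F]$ still has $|V'|=|V|+|P|+|C|=n+n+2k=2n+2k$ vertices. Since any matching consists of pairwise vertex-disjoint edges, we immediately get $\mu(G'[E'\setminus F])\leq (2n+2k)/2=n+k$.

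Next, because $F$ is a König edge deletion set, by definition $G'[E'\setminus F]$ is a König graph, and hence ${\vc}(G'[E'\setminus F])=\mu(G'[E'\setminus F])$. Combining this with the previous bound yields $|S|={\vc}(G'[E'\setminus F])=\mu(G'[E'\setminus F])\leq n+k$, which is the desired inequality.

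I do not foresee a real obstacle here: the argument is purely a vertex-count bound combined with the defining property of a König graph. In particular, this claim does not use the assumption $k<n/2$ made earlier in the construction, nor any structural property of $F$ beyond the fact that its deletion produces a König graph, nor even the hypothesis that $|F|\leq k$. The more delicate part of the reduction, namely lower-bounding $|S|$ and connecting it to the existence of a $k$-independent set in $G$, will presumably be the content of subsequent claims.
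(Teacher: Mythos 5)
Your proof is correct and rests on the same underlying observation as the paper's: a K\H{o}nig graph on $|V'| = 2n+2k$ vertices has vertex cover number at most $n+k$. The paper phrases this as a contradiction via the saturating-matching characterization (if $|S| > n+k$ then $|V'\setminus S| < |S|$, so no matching across the cut can saturate $S$), while you argue directly through $\vc = \mu \le |V'|/2$; these are the same counting argument.
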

\begin{proof}
Suppose that $|S| > n+k$, then $|V' \setminus S| < n+k < |S|$. Hence there can not be a matching across the cut $(S,V'\setminus S)$
in $G'[E' \setminus F]$ that saturates $S$. Hence by Lemma \ref{lem:konig_characterization} $G'[E'\setminus F]$ is not {\Konig}, 
a contradiction to the definition of $F$.
\end{proof}


\begin{claim} \label{claim:redn_vc_size}
    Let $F$ be a {\Konig} edge deletion set of $G'$ of size at most $k$ and let $R_F = \{v_i \in V \mid (v_i,p_i) \notin F\}$. Then 
    \begin{enumerate}
        \item For any minimum vertex cover $S$ of $G'[E'\setminus F]$, $C \subseteq S$.
        \item There exists a minimum vertex cover $S$ of $G'[E'\setminus F]$ such that $P \subseteq V'\setminus S$.
        \item For any minimum vertex cover $S$ of $G'[E'\setminus F]$, $|S| \geq 2k + |R_F|$.
    \end{enumerate}
\end{claim}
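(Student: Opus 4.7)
The plan is to handle the three parts of Claim \ref{claim:redn_vc_size} in order, building on Claim \ref{claim:konig_vc_upper} and the structure of $G'$.

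For part (1), I would argue by contradiction. Suppose some $c \in C$ is not in a minimum vertex cover $S$ of $G'[E' \setminus F]$. Then every neighbour of $c$ in $G'[E' \setminus F]$ must lie in $S$. In $G'$ the vertex $c$ is adjacent to every vertex of $V \cup P$, giving it degree $2n$, and the edge deletion set $F$ removes at most $k$ of its incidences, so $c$ retains at least $2n - k$ neighbours in $G'[E' \setminus F]$. Hence $|S| \geq 2n - k$. Combining with the upper bound $|S| \leq n + k$ from Claim \ref{claim:konig_vc_upper} and the standing assumption $k < n/2$ (so $n > 2k$, and therefore $2n - k > n + k$) produces the desired contradiction.

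For part (2), I would start with an arbitrary minimum vertex cover $S$ and show that every pendant sitting inside $S$ can be removed or swapped out without changing the cardinality. Fix $p_i \in S \cap P$. By part (1) we have $C \subseteq S$, so every edge of $G'[E' \setminus F]$ incident to $p_i$ other than possibly $(v_i, p_i)$ is already covered from the $C$-side. If $(v_i, p_i) \in F$, or if $v_i$ is already in $S$, I simply drop $p_i$ from $S$ and the result is still a vertex cover of size at most $|S|$. Otherwise I swap $p_i$ for $v_i$: the only edge whose coverage could change is $(v_i, p_i)$, and this is now covered by the freshly added $v_i$, while every other edge incident to $v_i$ keeps its original covering endpoint in $S$. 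Iterating over all pendants produces a minimum vertex cover with $P \subseteq V' \setminus S$.

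For part (3), I would partition $V(G')$ as the disjoint union $C \sqcup V \sqcup P$ and bound $|S|$ on each piece. By part (1), $|S \cap C| \geq 2k$. For every $v_i \in R_F$, the edge $(v_i, p_i)$ survives in $G'[E' \setminus F]$, so at least one of $v_i, p_i$ must lie in $S$; since the pairs $\{v_i, p_i\}$ are pairwise disjoint across $i$ and disjoint from $C$, this accounts for at least $|R_F|$ further vertices in $S$. Summing yields $|S| \geq 2k + |R_F|$. The main obstacle is the swap argument in part (2): I need the fact that $p_i$'s only non-$C$ incidence is the edge $(v_i, p_i)$, which is exactly the defining property of the pendants in the construction, so the verification is immediate; parts (1) and (3) are then straightforward counting on top of Claim \ref{claim:konig_vc_upper}.
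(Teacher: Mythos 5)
Your proof is correct and follows essentially the same route as the paper: the degree-counting contradiction against Claim~\ref{claim:konig_vc_upper} for part (1), the swap of a pendant $p_i$ for its neighbour $v_i$ for part (2), and counting the vertices forced into the cover for part (3). The only (harmless) difference is that in part (3) you count the disjoint pairs $\{v_i,p_i\}$ directly in an arbitrary minimum vertex cover, whereas the paper derives the bound for the $P$-avoiding cover from part (2) and lets it transfer because all minimum vertex covers have the same size; both yield the same conclusion.
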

\begin{proof}
Let $S$ be a minimum vertex cover of $G'[E'\setminus F]$, and let $c$ be a vertex in $C$. Since $c$ is adjacent to every vertex in $P$ and $V$, it has $2n$ neighbours in $G'$, and hence has at least $2n-k$ neighbours in $G'[E'\setminus F]$. 
Suppose that $c$ is not in $S$, then all its neighbours must be in $S$. However then $|S| \geq 2n-k > n+k$ (as $k < n/2$).
By Claim \ref{claim:konig_vc_upper}, this contradicts the fact that $F$ is {\Konig} edge deletion set. 
Hence every vertex $c \in C$ must be in $S$.
    
Let $S$ be a minimum vertex cover of $G'[E'\setminus F]$, and suppose there exists a vertex $p_i \in P$ such that 
$p_i$ is also in $S$. Since we know that all vertices of $C$ are in $S$, the only possible edge that $p_i$ might uniquely cover is the edge $(v_i,p_i)$. Consider the set $S' = S\setminus \{p_i\} \cup \{v_i\}$. Clearly $S'$ is also a vertex cover of $G'[E'\setminus F]$ of the same cardinality. As we can do this for every $p_i \in P$, we have that $P \subseteq V'\setminus S'$ proving the claim.

Let $S'$ be such a minimum vertex cover that contains no vertex of $P$.
Then all $v_i$ such that $(v_i,p_i) \notin F$ must be in $S'$ to cover the edge $(v_i,p_i)$. 
By definition of $R_F$, the number of these vertices is $|R_F|$. Also since $C \subseteq S'$, the size of $S'$ must be at least $|C| + |R_F| = 2k+|R_F|$.
\end{proof}

Let $F$ be a {\Konig} edge deletion set of $G$, and let $R_F$ be as defined in Claim~\ref{claim:redn_vc_size}, and let $S$ be a minimum vertex cover of $G'[V'\setminus F]$.
Then $|R_F| \geq n - |F|$ by the definition of $R_F$, and $|S| \leq n+k$ from Claim~\ref{claim:konig_vc_upper}. Furthermore if $|F| \leq k$, then we have from 
Claim~\ref{claim:redn_vc_size}, that $n+k \geq |S| \geq 2k + n - |F|$ and so it follows that $|F|=k$. Thus we have the following corollary.

\begin{corollary}
\label{cor:konigsize}
For any {\Konig} edge deletion set $F$ of $G'$, $|F| \geq k$.
\end{corollary}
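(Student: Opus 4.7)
The plan is to combine the upper bound on the minimum vertex cover of $G'[E' \setminus F]$ coming from Claim~\ref{claim:konig_vc_upper} with the lower bound coming from part 3 of Claim~\ref{claim:redn_vc_size}, and to play them off against the obvious inequality $|R_F| \geq n - |F|$ that follows from the definition of $R_F$.

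Concretely, I would handle the statement by a quick case split on whether $|F| \leq k$ or $|F| > k$. The case $|F| > k$ is immediate. So assume $|F| \leq k$, which is exactly the hypothesis under which Claim~\ref{claim:redn_vc_size} applies. Fix a minimum vertex cover $S$ of the {\Konig} graph $G'[E' \setminus F]$. By Claim~\ref{claim:konig_vc_upper} we have $|S| \leq n+k$, and by the third part of Claim~\ref{claim:redn_vc_size} we have $|S| \geq 2k + |R_F|$. Moreover, since $R_F$ consists of exactly those $v_i \in V$ whose pendant edge $(v_i,p_i)$ has not been deleted, and since at most $|F|$ edges have been deleted in total, we also have $|R_F| \geq n - |F|$. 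Chaining these inequalities yields
\[
n + k \;\geq\; |S| \;\geq\; 2k + |R_F| \;\geq\; 2k + n - |F|,
\]
which rearranges to $|F| \geq k$, as desired.

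I do not expect any real obstacle here, since Claims~\ref{claim:konig_vc_upper} and~\ref{claim:redn_vc_size} were set up precisely so that this corollary drops out by arithmetic. The only bookkeeping point to be careful about is the hypothesis of Claim~\ref{claim:redn_vc_size}, which is phrased for deletion sets of size at most $k$; this forces the dichotomy on $|F| \leq k$ versus $|F| > k$ described above. As a byproduct of the displayed inequality, in the tight regime $|F| \leq k$ we actually obtain the stronger conclusion $|F| = k$, which will presumably be useful in the rest of the reduction.
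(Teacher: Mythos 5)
Your proof is correct and follows exactly the same route as the paper: the authors likewise chain $n+k \geq |S| \geq 2k + |R_F| \geq 2k + n - |F|$ under the hypothesis $|F| \leq k$ to conclude $|F| = k$ (hence $|F| \geq k$ in general). Your explicit remark about the case split on $|F| \leq k$ versus $|F| > k$ and the byproduct $|F| = k$ matches the paper's own discussion preceding the corollary.
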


Now we are ready to prove the correctness of the reduction.

\begin{claim} \label{claim:is_to_ked}
If $G$ has an independent set of size $k$ then $G'$ has a {\Konig} edge deletion set of size $k$.
\end{claim}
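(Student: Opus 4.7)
The plan is to exhibit an explicit Königs edge deletion set and verify the Königs property via Lemma~\ref{lem:konig_partition}. Given an independent set $I = \{v_{i_1},\dots,v_{i_k}\}$ of size $k$ in $G$, I take $F$ to be the $k$ pendant edges incident to $I$, namely
\[
F \;=\; \{\,(v_{i_j},p_{i_j}) : j=1,\dots,k\,\}.
\]
Clearly $|F|=k$, so it remains to show $G'[E'\setminus F]$ is Königs.

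The natural candidate vertex cover in the reduced graph is $S = C \cup (V\setminus I)$. I will verify it is a vertex cover: every $G$-edge has at least one endpoint in $V\setminus I$ because $I$ is independent; every surviving pendant edge $(v_i,p_i)$ with $i \notin \{i_1,\dots,i_k\}$ has $v_i\in S$; and every edge incident to $C$ is covered by $C\subseteq S$. Note that $|S| = 2k + (n-k) = n+k$ and $|V'\setminus S| = |I| + |P| = k + n$, so the two sides of the cut have the same cardinality.

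The main step is then to produce a matching across $(S, V'\setminus S)$ saturating $S$; by Lemma~\ref{lem:konig_partition} this forces $G'[E'\setminus F]$ to be Königs. I build it in two pieces. First, for each $v_i \in V\setminus I$, I match $v_i$ with its pendant $p_i$; this edge still exists in $G'[E'\setminus F]$ since $v_i\notin I$, and it uses up all of $V\setminus I$ on the $S$-side and all of $\{p_i : v_i\in V\setminus I\}$ on the other side. What remains unmatched is $C$ on the $S$-side (size $2k$) and $I \cup \{p_{i_j} : j=1,\dots,k\}$ on the other side (size $k+k=2k$). Since every vertex of $C$ is adjacent to every vertex of $V\cup P$ in $G'$, the bipartite graph between these two remaining sets is complete bipartite $K_{2k,2k}$, and it is undisturbed by $F$ (we only deleted pendant edges). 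Hence any perfect matching of this $K_{2k,2k}$ completes the matching across $(S,V'\setminus S)$ that saturates $S$.

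There is no real obstacle here; the construction is tight because $|S|$ must equal $n+k$ by Claim~\ref{claim:konig_vc_upper} combined with Claim~\ref{claim:redn_vc_size}, and the proof essentially amounts to checking that the bipartite ``budget'' between $S$ and $V'\setminus S$ balances exactly when we delete precisely the pendants incident to a size-$k$ independent set. Applying Lemma~\ref{lem:konig_partition} to the pair $(S,M)$ just constructed concludes that $G'[E'\setminus F]$ is Königs, completing the proof of the claim.
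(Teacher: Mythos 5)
Your proof is correct and follows essentially the same route as the paper: the same deletion set (the $k$ pendant edges incident to $I$), the same vertex cover $C \cup (V\setminus I)$, the same two-part matching (pendants for $V\setminus I$, then $C$ matched into $I$ and the pendants of $I$ via the complete bipartite edges), and the same appeal to Lemma~\ref{lem:konig_partition}. The only cosmetic difference is that the paper defines $F$ as all edges inside $P\cup I$ and then argues these are exactly the $k$ pendant edges, whereas you name them directly.
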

\begin{proof}
Let $I$ be the independent set of size $k$ in $G$.

Construct $V_1 = (V \setminus I) \cup C$ and $V_2 = P \cup I$ and let $F = (V_2, V_2)$. 
Clearly, $V_1$ and $V_2$ partition $V$. Since $I$ and $P$ are independent, there are no edges in $(P,P)$ and $(I,I)$ and so all edges of $(V_2,V_2)$ are in $(P,I)$. There are exactly $k$ many such edges because every vertex in $P$ is a pendant vertex and $|I|=k$. So $|F|= |(V_2,V_2)| = k$. We claim that $G'[V'\setminus F]$ is a {\Konig} graph. 
As $G'[V_2\setminus F] $ is independent, $V_1$ is a vertex cover of $G'[V'\setminus F]$. By Lemma~\ref{lem:konig_partition}, it sufficies to show that there is a matching $M$ across $(V_1, V_2)$ saturating $V_1$ in $G'[V'\setminus F]$.

Construct the matching $M$ as follows. Every vertex in $V \setminus I$ gets paired with its pendant vertex in $P$. 
Vertices in the first half of $C$ (that is, $c_1,\ldots,c_k$) get paired with vertices in $I$ (arbitrarily). Vertices in the second half of $C$ (that is, $c_{k+1}, \ldots , c_{2k})$ get paired with the pendant vertices corresponding to vertices of $I$ 
(arbitrarily). Because every vertex in $C$ is fully adjacent to vertices in $V \cup P$, such a pairing is always possible 
(see Figure~\ref{fig:reduction} for an illustration).
\end{proof}

\begin{figure}[t]
\centering
    \includegraphics[scale=0.35]{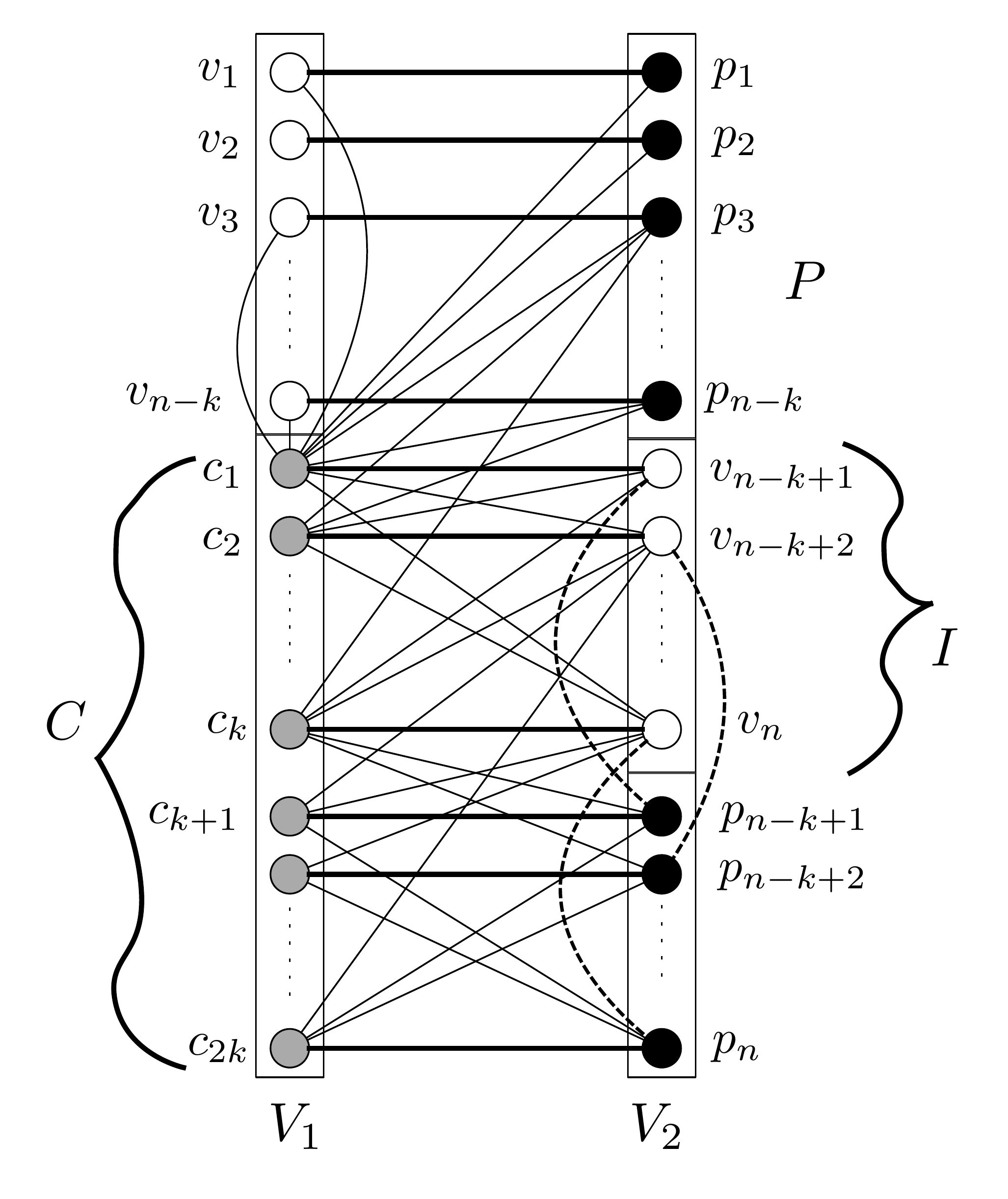}
    \caption{Illustration of the reduction where $I = \{v_{n-k+1},v_{n-k+2},...,v_n\}$ that is an independent set.
    Edges in $M$ are illustrated by thick line segments of black color, and the edges in $F$ are illustrated by `dotted' line segments (for example, the edge $(v_n,p_n)$).}
\label{fig:reduction}
\end{figure}

\begin{claim} \label{claim:ked_to_is}
If $G'$ has a {\Konig} edge deletion set of size $k$, then $G$ has an independent set of size $k$.
\end{claim}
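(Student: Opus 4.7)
The plan is to extract the desired independent set directly from a \Konig{} edge deletion set $F$ of $G'$ of size at most $k$, by reading off which pendant edges $F$ deletes. The crucial structural observation I will establish first is that $F$ must consist \emph{entirely} of pendant edges $(v_i,p_i)$: by Corollary~\ref{cor:konigsize} we already have $|F|=k$, and if $j$ denotes the number of pendant edges in $F$ then $|R_F|=n-j$, so chaining the bound $|S|\leq n+k$ from Claim~\ref{claim:konig_vc_upper} with $|S|\geq 2k+|R_F|=2k+(n-j)$ from Claim~\ref{claim:redn_vc_size} forces $j\geq k$. Since $j\leq |F|=k$, this means $j=k$ and no non-pendant edge can belong to $F$. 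This tight arithmetic is where I expect the only real friction, but the earlier claims are already set up to leave exactly zero slack.

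Having fixed this, I will set $I:=\{v_i \mid (v_i,p_i)\in F\}$, so $|I|=k$ and $V\setminus I=R_F$. I then invoke the particular minimum vertex cover $S$ of $G'[E'\setminus F]$ guaranteed by Claim~\ref{claim:redn_vc_size} to satisfy $P\subseteq V'\setminus S$. Since each $p_i$ with $v_i\in R_F$ is outside $S$ but the edge $(v_i,p_i)$ still exists, each such $v_i$ must lie in $S$; together with $C\subseteq S$, this gives $C\cup(V\setminus I)\subseteq S$. The cardinalities already match, namely $|C\cup(V\setminus I)|=2k+(n-k)=n+k=|S|$, so in fact $S=C\cup(V\setminus I)$ exactly.

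To finish, I observe that since $F$ consists only of pendant edges, every edge of $G$ survives in $G'[E'\setminus F]$, and $S$ must cover all of them. As $S\cap V=V\setminus I$, every edge of $G$ has at least one endpoint outside $I$, which is precisely the statement that $I$ is an independent set of size $k$ in $G$. Combined with the forward direction in Claim~\ref{claim:is_to_ked}, this establishes the parameterized reduction and hence Theorem~\ref{thm:whardness-ked}.
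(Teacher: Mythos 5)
Your proof is correct, but it takes a genuinely different route from the paper's. The paper first pins down the structure of $F$ via an exchange argument: any edge of $F$ with an endpoint in the vertex cover $S'$ could be added back while preserving K\H{o}nigness (by Lemma~\ref{lem:konig_partition}), contradicting the lower bound of Corollary~\ref{cor:konigsize}; this yields $F=(V'\setminus S',V'\setminus S')$, after which a vertex count shows exactly $k$ vertices of $V$ lie outside $S'$ and these must be independent. You instead bypass the exchange argument entirely: writing $j$ for the number of pendant edges in $F$, you chain $|S|\leq n+k$ (Claim~\ref{claim:konig_vc_upper}) with $|S|\geq 2k+|R_F|=2k+(n-j)$ (Claim~\ref{claim:redn_vc_size}) to force $j\geq k$, hence $j=|F|=k$ and $F$ consists solely of pendant edges; the containment $C\cup(V\setminus I)\subseteq S$ together with $|S|\leq n+k$ then pins $S$ down exactly, and independence of $I$ follows because every edge of $G$ survives the deletion and must be covered by $S\cap V=V\setminus I$. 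Your version is arguably the more economical one, since it needs neither Lemma~\ref{lem:konig_partition} nor the ``add an edge back'' step, only the counting bounds already established; it also makes explicit a fact the paper leaves implicit, namely that the arithmetic in the derivation of Corollary~\ref{cor:konigsize} is tight precisely when all deleted edges are pendant. What the paper's argument buys in exchange is a structural characterization of \emph{all} minimum solutions ($F$ is exactly the set of edges induced by the complement of the vertex cover), which is conceptually informative even though it is not needed for the claim.
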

\begin{proof}
Let $F$ be the {\Konig} edge deletion set of size $k$. Let $S'$ be the minimum vertex cover of $G'[E' \setminus F]$ given by Claim \ref{claim:redn_vc_size} such that $C \subseteq S'$ and $P \subseteq V'\setminus S'$. Suppose that there exists an edge $e \in (S',S') \cap F$ or an edge $e \in (S',V'\setminus S') \cap F$ then we can add $e$ back into the graph (by deleting from $F$) and the resulting graph will still remain {\Konig} by Lemma ~\ref{lem:konig_partition}. However, this gives us a {\Konig} edge deletion set of size $k-1$ contradicting Corollary~\ref{cor:konigsize}. 
Hence $F \subseteq (V' \setminus S', V' \setminus S')$. Moreover, since $V'\setminus S'$ is an independent set in $G'[E'\setminus F]$, $F$ must delete all edges in $V'\setminus S'$ and so $F \supseteq (V' \setminus S', V' \setminus S')$. Hence we have that 
$F = (V'\setminus S',V'\setminus S')$.

Since $G'[E' \setminus F]$ is {\Konig}, 
by Claim \ref{claim:konig_vc_upper}, $|S'| \leq n+k$ and hence $|V'\setminus S'| \geq n+k$. All vertices in $P$ are in $V' \setminus S'$, and so that accounts for $n$ vertices. Since $C \cap (V'\setminus S') = \emptyset$, all the other at least $k$ vertices in $V'\setminus S'$ must come from $V$. 
For each such vertex $v_i$ in $V \cap (V' \setminus S')$, there exists an edge of the form $(v_i,p_i)$ in $(V'\setminus S',V'\setminus S') = F$. 
As $|F|=k$, it follows that exactly $k$ vertices from $V$ are in $V'\setminus S'$. 
If there is an edge between two vertices in $V \cap (V'\setminus S')$, then that edge must be in $F$ making $|F| > k$, a contradiction. Hence the $k$ vertices in $V \cap (V'\setminus S')$ form a $k$-sized independent set and since they are from $V$, they also form a $k$-sized independent set in $G$.
\end{proof}

The Claims \ref{claim:is_to_ked} and \ref{claim:ked_to_is} complete the proof of Theorem \ref{thm:whardness-ked}.

Note that $G'$ has a perfect matching (as shown in Figure~\ref{fig:reduction}). We have the following corollary:

\begin{theorem}
{\KED} is {\Wonehard} even when the input graph has a perfect matching.
\end{theorem}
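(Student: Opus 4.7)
The plan is to observe that the graph $G'$ constructed in the proof of Theorem~\ref{thm:whardness-ked} already admits a perfect matching, independently of whether $G$ has a $k$-sized independent set. The corollary then follows immediately: the same reduction serves as a parameterized reduction from {\sc Independent Set} to {\KED} restricted to graphs with a perfect matching.

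To make this concrete, I would exhibit an explicit perfect matching $M^{\star}$ in $G'$ by mimicking the pairing from the proof of Claim~\ref{claim:is_to_ked}, but with an arbitrary $k$-subset of $V$ in place of the independent set $I$. Since we assume $k < n/2$, pick any $k$ distinct vertices $v_{i_1},\ldots,v_{i_k} \in V$. For every remaining $v_j$ (with $j \notin \{i_1,\ldots,i_k\}$), match $v_j$ with its pendant $p_j$; for $1 \le j \le k$, match $v_{i_j}$ with $c_j$; and for $1 \le j \le k$, match $p_{i_j}$ with $c_{k+j}$. All these edges are present in $G'$: pendant edges by construction, and edges of the form $(c, v)$ or $(c, p)$ because every vertex of $C$ is fully adjacent to $V \cup P$. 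Each of the $2n + 2k$ vertices in $V' = V \cup P \cup C$ is covered exactly once, so $M^{\star}$ is a perfect matching of $G'$.

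Since the reduction from {\sc Independent Set} parameterized by solution size to {\KED} parameterized by $k$ therefore always outputs an instance $G'$ having a perfect matching, the {\Wonehard}ness established in Theorem~\ref{thm:whardness-ked} transfers to the restricted class. There is no real obstacle; the claim is essentially a byproduct of the existing reduction, as already suggested by Figure~\ref{fig:reduction}.
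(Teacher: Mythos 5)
Your argument is correct and is essentially the paper's own: the paper simply observes that the constructed graph $G'$ always has a perfect matching (pointing to Figure~\ref{fig:reduction}), so the reduction of Theorem~\ref{thm:whardness-ked} already lands in the restricted class. Your only addition is to write out the perfect matching explicitly, which checks out ($2(n-k)+2k+2k = 2n+2k$ vertices covered, all edges present in $G'$).
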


\section{{\Konig} Edge Deletion disjoint from Matching}

As the goal of the {\Konig} edge deletion problem is to delete edges to make the vertex cover size equal to the maximum matching, a natural question is whether the minimum {\Konig} edge deletion set should be allowed to delete edges of a maximum matching. This is a strong restriction since there are some graphs in which the minimum {\Konig} edge deletion set must pick some edges from a maximum matching of the graph. 
See Figure \ref{fig:counterexample} for an example of such a graph.  In the figure, the matching shown is the only maximum matching of the graph and all the minimum {\Konig} edge deletion sets of the graph intersect this matching (shown is a particular minimum {\Konig} edge deletion set).
This motivates the following variant of the {\Konig} edge deletion problem.

\begin{figure}
	
\begin {center}
\begin {tikzpicture}
\begin{scope} [vertex style/.style={draw,
                                       circle,
                                       minimum size=3pt,
                                       inner sep=0pt,
                                       outer sep=0pt,
										text depth=10pt
}] 
	\path	 
		(0.5,0) coordinate[vertex style](a)
		(-0.5,0) coordinate[vertex style](b)
		(0.5,1) coordinate[vertex style](c)
		(0.5,2) coordinate[vertex style](d)
		(-0.5,1) coordinate[vertex style](e)
		(-0.5,2) coordinate[vertex style](f)
		(0.5,-1) coordinate[vertex style](t1)
		(1.5,-1) coordinate[vertex style](t2)
		(-0.5,-1) coordinate[vertex style](t3)
		(-1.5,-1) coordinate[vertex style](t4)
		(1.5,0) coordinate[vertex style](t5)
		(1.5,1) coordinate[vertex style](t6)
		(-1.5,0) coordinate[vertex style](t7)
		(-1.5,1) coordinate[vertex style](t8)

	   ; 

\end{scope}

\begin{scope} [edge style/.style={draw=black}]
	\draw[edge style] (a)--(c);
	\draw[edge style] (b)--(e);
	\draw[edge style] (a)--(t1);
	\draw[edge style] (a)--(t2);
	\draw[edge style] (b)--(t3);
	\draw[edge style] (b)--(t4);
	\draw[edge style] (a)--(t5);
	\draw[edge style] (a)--(t6);
	\draw[edge style] (b)--(t7);
	\draw[edge style] (b)--(t8);

\end{scope}
\begin{scope} [edge style/.style={draw=black, line width=2}]
	\draw[edge style] (a)--(b);
	\draw[edge style] (t2)--(t1);
	\draw[edge style] (t3)--(t4);
	\draw[edge style] (t5)--(t6);
	\draw[edge style] (t7)--(t8);
\end{scope}
\begin{scope} [edge style/.style={draw=black, line width=2}]
	\draw[dashed] (e)--(f);
	\draw[dashed] (c)--(d);
\end{scope}
\end{tikzpicture}
\end{center}
\caption{A graph where the minimum {\Konig} edge deletion set must pick an edge from the maximum matching. The dotted edges denote the {\Konig} edge deletion set. The dotted edges plus the bold edges are the edges in the maximum matching}
\label{fig:counterexample}
\end{figure}

\defproblem{{\sc {\Konig} Edge Deletion disjoint from Matching}}{Graph $G = (V,E)$, a maximum matching $M$ of $G$ and integer $k$}{Does there exists $F \subseteq E$ such that $|F| \leq k$ and $G[E \setminus F]$ is {\Konig} and $F \cap M = \emptyset$?}

A similar variant was studied with respect to edge deletion to stable graphs. A graph is said to be stable if the optimum fractional maximum matching of the graph is equal to the size of its maximum matching. 
Note that due to Lemma \ref{lem:konig_alternate}, {\Konig} graphs can be characterized as the class of graphs for which the optimal fractional vertex cover is equal to the minimum vertex cover size, thus {\Konig} edge deletion and stable edge deletion are very similar problems. The problem of deleting $k$ non-matching edges (for a particular maximum matching) such that the resulting graph becomes stable was looked at by Bock et. al. where they prove that this variant of the problem remains NP-hard~\cite{bock2015finding}. 

We show that our variant of {\Konig} edge deletion problem remains NP-hard, but fixed-parameter tractable when parameterized by the solution size.
Therefore, it seems that the W-hardness of {\Konig} edge deletion problem lies in the cases where we might want to 
delete edges that reduce the size of the maximum matching. 

To prove our results, we show that the problem is equivalent (in a parameter preserving sense) to
the following problem known as {\sc Almost-2-SAT}.

\defproblem{{\sc Almost-2-SAT}}{A 2-CNF-SAT formula $F$ and an integer $k$}{Does there exist at most $k$ clauses of $F$ whose deletion results in a satisfiable formula?}

The {\sc Almost-2-SAT} problem was studied previously~\cite{garey2002computers} where the authors proved that it is NP-hard. It was later shown to be fixed-parameter-tractable with a $O^*(15^k)$ algorithm~\cite{RS09}\footnote{$O^*$ notation ignores polynomial factors}. 
The current best algorithm runs in $O^*(2.31^k)$ time~\cite{LNRRS14}.

\subsection{NP-hardness}

\begin{theorem}
\label{thm:ked-disjoint-from-mm-nphard}
The {\Konig} Edge Deletion disjoint from Matching problem is NP-hard.
\end{theorem}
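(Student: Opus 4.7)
The plan is to give a polynomial-time reduction from {\sc Almost-2-SAT}, which is NP-hard~\cite{garey2002computers}, to {\Konig} Edge Deletion disjoint from Matching, transferring the parameter unchanged. Given a 2-CNF formula $\varphi$ and budget $k$, I will produce a graph $G$ together with an explicit maximum matching $M$ and set $k' = k$, so that $G$ admits a {\Konig} edge deletion set of size at most $k$ disjoint from $M$ if and only if $\varphi$ has a satisfying assignment after removing at most $k$ of its clauses. The one point that requires care is that $M$ must genuinely be a \emph{maximum} matching; this forces us first to eliminate unit clauses. I replace each unit clause $(\ell)$ by the pair $(\ell \vee z) \wedge (\ell \vee \neg z)$, where $z$ is a brand-new variable used nowhere else: deleting none of the pair still forces $\ell$, deleting exactly one removes the constraint on $\ell$ (set $z$ to satisfy the survivor), and deleting both is strictly wasteful, so the minimum deletion budget is preserved. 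Hence we may assume $\varphi$ has only binary clauses over variables $x_1,\ldots,x_n$ and clauses $C_1,\ldots,C_m$.

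\textbf{Construction.} For every variable $x_i$ introduce two vertices $T_i,F_i$ and put the edge $\{T_i,F_i\}$ into $M$. For every binary clause $C_j = (\ell_{j,1} \vee \ell_{j,2})$ add a single non-$M$ edge $\{a_{j,1}, a_{j,2}\}$, where $a_{j,r}$ is $T_i$ if $\ell_{j,r} = x_i$ and $F_i$ if $\ell_{j,r} = \neg x_i$. Duplicate and tautological clauses can be discarded beforehand (they never help an adversary against {\sc Almost-2-SAT}), so $G$ is a simple graph.

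\textbf{Correctness.} Every vertex of $G$ is saturated by $M$, so no augmenting path can exist and $M$ is a maximum matching of size $n$; moreover, since any feasible deletion set $F$ is disjoint from $M$, $M$ remains maximum in $G[E(G)\setminus F]$. Thus {\Konig}-ness of $G[E(G)\setminus F]$ is equivalent to its minimum vertex cover having size exactly $n$. Any such cover $S$ must, by pigeonhole, contain exactly one endpoint of each $M$-edge, and therefore naturally encodes an assignment $\tau$ by $\tau(x_i) = \top$ iff $T_i \in S$; a surviving clause edge is covered by $S$ if and only if the corresponding clause is satisfied by $\tau$. This yields the equivalence in both directions: a {\Konig} edge deletion set of size $\leq k$ disjoint from $M$ produces an assignment falsifying at most $k$ clauses, and conversely any assignment falsifying at most $k$ clauses yields, by deleting exactly the clause edges of the falsified clauses, a {\Konig} edge deletion set of size $\leq k$ disjoint from $M$. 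The bulk of the work is thus the unit-clause preprocessing and the size-$n$ pigeonhole step that turns covers into assignments; once $M$ is pinned down as a perfect matching, everything else is direct.
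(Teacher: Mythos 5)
Your proposal is correct and follows essentially the same route as the paper: a reduction from {\sc Almost-2-SAT} that builds the literal graph with a matching edge $\{T_i,F_i\}$ per variable (the paper's $(v_a,v_{\neg a})$) and one non-matching edge per binary clause, and then argues via the perfect matching that {\Konig}-ness of the deleted graph is exactly the existence of a size-$n$ vertex cover, which encodes a truth assignment. The only divergence is the unit-clause preprocessing --- you use a fresh variable $z$ per unit clause with the pair $(\ell\vee z)\wedge(\ell\vee\neg z)$, whereas the paper forces a single global variable $f$ to false via $k+1$ duplicated clause pairs and rewrites $(\ell)$ as $(\ell\vee f)$ --- and both gadgets work.
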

\begin{proof}
We will reduce from {\sc Almost-2-SAT}. We are given an {\sc Almost-2-SAT} instance $(V,C,k)$ where $V$ is the set of variables and $C$ is the set of clauses. 
First we will preprocess the instance to remove any singleton clauses. We add two new variables $f$ and $x$. We add $k+1$ copies of the clauses $(\neg f \lor x)$ and $(\neg f \lor \neg x)$. The only way to satisfy these clauses is to set $f$ to $0$. For every clause $c_i \in C$ consisting of a single literal $l$. We will replace it with the clause $(l \lor f)$. Since $f$ is forced to be false, this forces the algorithm to either set $l$ to true or to delete the clause.

After the above preprocessing (which can be done in polynomial time), we know that all clauses have size exactly $2$. We will construct a graph $G'$ as follows:

For every literal $a$ in $V$, add a vertex labelled $v_a$. For every clause $(a \lor b) \in C$, where $a,b$ are literals, add an edge between vertices $v_a$ and $v_b$. Additionally, for every variable $a$, add an edge between $v_a$ and $v_{\neg a}$. This ends the construction of $G'$. Formally, $G'=(V',E')$ is given as follows:

\[V' = \bigcup\limits_{a \in V} \{v_a, v_{\neg a}\}\]

\[E' = \{(v_a,v_{\neg a}) \mid a \in V\} \cup \{(v_a,v_b) \mid (a \lor b) \in C\}\]

We also need to give a maximum matching $M$ as input. Let $M = \{(v_a,v_{\neg a}) \mid a \in V\}$. The size of any matching can not exceed half the number of vertices in the graph. Since there are $2|V|$ vertices in $G'$ and the size of $M$ is $|V|$, $M$ is a maximum matching.

We input the instance $(G',M,k)$ to the {\Konig} Edge Deletion disjoint from Matching problem.

\begin{claim}
	If $(V,C,k)$ is a {\sf YES} instance for {\sc Almost-2-SAT} then $(G',M,k)$ is a {\sf YES} instance for {\Konig} Edge Deletion disjoint from Matching.
\end{claim}
\begin{proof}
	Since $(V,C,k)$ is a {\sf YES} instance for {\sc Almost-2-SAT}, there must exist a subset of clauses $D \subseteq C$ such that $|D| \leq k$ and the 2-SAT instance $(V,C\setminus D)$ is satisfiable. Define $F = \{(v_a,v_b) \mid (a \lor b) \in D\}$, these are the edges of $G'$ that correspond to the deleted clauses in $D$. It should be clear that $|F| \leq k$ and $F \cap M = \emptyset$. We will prove that $F$ is a {\Konig} edge deletion set of $G'$.

	Since $(V,C\setminus D)$ is satisfiable, there exists a satisfying assignment $f : V \to \{0,1\}$. Define $V_1 = \{v_a \mid f(a) = 1\} \cup \{v_{\neg a} \mid f(a) = 0\}$ and define $V_2 = \{v_{a} \mid f(a) = 0\} \cup \{ v_{\neg a} \mid f(a) = 1 \}$. That is to say, $V_1$ is the set of vertices that correspond to the literals assigned to $True$ in $f$ and $V_2$ are the set of vertices that correspond to the literals assigned to $False$ in $f$. By definition, $V_1$ and $V_2$ are partitions of $V'$. Also note that, every edge of $M$ intersects $V_1$ exactly once. We will prove that $V_2$ is an independent set in $G'[E' \setminus F]$. 
	Suppose there were an edge $(v_a,v_b)$ such that $v_a,v_b \in V_2$. Since $v_a,v_b \in V_2$, it must be that the literals $a$ and $b$ are assigned to $False$ in $f$, however then the clause $(a \lor b)$ is unsatisfied by $f$ and so it must be in $D$. Then it must be that $(v_a,v_b) \in F$. Hence $V_2$ is an independent set in $G'[E' \setminus F]$ which implies that $V_1$ is a vertex cover of $G'[E' \setminus F]$. 
	
	Therefore we have a partition of $V$ into $V_1$ and $V_2$ such that every edge in $M$ intersects $V_1$ exactly once and $V_1$ is a vertex cover. By Lemma \ref{lem:konig_partition} $G'[E \setminus F]$ is {\Konig} and $F$ is a {\Konig} edge deletion set of size at most $k$.
\end{proof}

\begin{claim}
	If $(G',M,k)$ is a {\sf YES} instance for {\Konig} Edge Deletion disjoint from Matching then $(V,C,k)$ is a {\sf YES} instance for {\sc Almost-2-SAT}.
\end{claim}
\begin{proof}
	Since $(G',M,k)$ is a {\sf YES} instance for {\Konig} Edge Deletion disjoint from Matching, there must be a set $F \subseteq E'$ such that $|F| \leq k$ and $F \cap M = \emptyset$ such that $G'[E'\setminus F]$ is {\Konig}. Define the set of clauses $D = \{(a \lor b) \mid (v_a,v_b) \in F\}$, it is clear that $|D| \leq k$. We will prove that $(V,C\setminus D)$ is satisfiable and hence $D$ is a solution to the {\sc Almost-2-SAT instance}.

	Let $S$ be a minimum vertex cover of $G'[E'\setminus F]$. Since $G'[E'\setminus F]$ is {\Konig}, by Observation \ref{obs:konig_mm}, every edge in $M$ intersects $V_1$ exactly once. Since every edge in $M$ is of the form $(v_a,v_{\neg a})$, it must be the case that for every $a \in V$, exactly one of $v_a$ or $v_{\neg a}$ is in $V_1$.

	To prove that $(V,C\setminus D)$ is satisfiable we give a truth assignment of the variables $f : V \to \{0,1\}$. Define $f(a) = 1$ if $v_a \in V_1$ and $f(a) = 0$ otherwise. Since exactly one of $v_a$ or $v_{\neg a}$ is in $V_1$, $f(a)$ is well-defined. Suppose $(a \lor b) \in C\setminus D$ is a clause. Since $(v_a,v_b)$ is an edge and $V_1$ is a vertex cover of $G'[E'\setminus F]$, $v_a \in V_1$ or $v_b \in V_1$ and hence $f(a) = 1$ or $f(b) = 1$ and hence $(a \lor b)$ is satisfied.

	Hence every clause in $(V,C\setminus D)$ is satisfied and $D$ is a solution to the {\sc Almost-2-SAT} instance.
\end{proof}

Combining the above two claims, 
we have the proof of Theorem \ref{thm:ked-disjoint-from-mm-nphard}.
\end{proof}

Note that in the above reduction, the parameter $k$ is preserved. That is to say, an instance of {\sc Almost-2-SAT} with parameter $k$ gets mapped to an instance of {\Konig} Edge Deletion disjoint from Matching with parameter $k$. Hence we have the following corollary.

\begin{corollary} \label{cor:ppr}
	{\sc Almost-2-SAT} has a parameter-preserving reduction to {\Konig} Edge Deletion disjoint from Matching.
\end{corollary}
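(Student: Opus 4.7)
The plan is to observe that the reduction already constructed in the proof of Theorem \ref{thm:ked-disjoint-from-mm-nphard} is, by inspection, a parameter-preserving reduction, and no new construction is needed. A parameter-preserving reduction from $\Pi_1$ to $\Pi_2$ is a polynomial-time many-one reduction that maps an instance $(x,k)$ to an instance $(y,k')$ with $k'=k$ and preserves the answer. So I only need to verify three things about the mapping $(V,C,k) \mapsto (G',M,k)$ from the previous proof.

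First, I would check that the mapping runs in polynomial time. The preprocessing introduces only two new variables $f$ and $x$ and appends $2(k+1)$ forcing clauses together with a one-for-one replacement of singleton clauses $(l) \mapsto (l \lor f)$; so after preprocessing the instance size is bounded by $|C| + O(k)$. The graph $G'$ then has at most $2(|V|+2)$ vertices and $|V|+2+|C|+2(k+1)$ edges, all computable in polynomial time, and the matching $M = \{(v_a,v_{\neg a}) : a \in V\}$ is output directly. Second, I would check that the parameter is transferred verbatim: the preprocessing does not modify $k$ (the $k+1$ copies of each forcing clause guarantee that at least one survives any deletion of size at most $k$, so $f$ is still forced to $0$ without requiring any deletion), and the constructed {\Konig} Edge Deletion disjoint from Matching instance uses exactly the same budget $k$. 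Third, the equivalence of YES-instances is precisely the content of the two claims in the proof of Theorem \ref{thm:ked-disjoint-from-mm-nphard}.

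There is really no genuine obstacle here; the only point that deserves a sentence of justification is the preprocessing step, since one might worry that the added clauses change the structural character of the instance. But the reduction is correct on arbitrary 2-CNF inputs, and since $k$ is literally carried through unchanged, the triple of conditions for a \ppr{} is satisfied. Thus the corollary follows immediately from Theorem \ref{thm:ked-disjoint-from-mm-nphard}.
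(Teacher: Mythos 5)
Your proposal is correct and takes essentially the same route as the paper: the paper derives the corollary in one line by observing that the reduction in the proof of Theorem~\ref{thm:ked-disjoint-from-mm-nphard} maps an instance with parameter $k$ to an instance with the same parameter $k$. Your additional checks (polynomial running time of the construction and the fact that the preprocessing leaves $k$ untouched) are just a more explicit spelling-out of the same observation.
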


\subsection{Fixed-parameter-tractability}

In this subsection, we will prove the fixed-parameter tractability of {\Konig} Edge Deletion disjoint from Matching.

\begin{theorem}
\label{thm:ked-disjoint-from-matching-fpt}
There is a fixed-parameter-tractable algorithm to find a minimum {\Konig} Edge Deletion set disjoint from Matching.
\end{theorem}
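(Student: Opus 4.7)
The plan is to give a parameter-preserving reduction in the opposite direction of Corollary~\ref{cor:ppr}, from {\Konig} Edge Deletion disjoint from Matching to {\sc Almost-2-SAT}, and then invoke the known FPT algorithm for {\sc Almost-2-SAT}~\cite{LNRRS14}. Since the former reduction already shows the two problems are parameter-preserving equivalent, it is natural to expect a 2-SAT formulation where variables encode, for each matching edge, which of its two endpoints lies in the intended vertex cover.

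The key structural observation driving the reduction is that since $F \cap M = \emptyset$ and $M$ is a maximum matching of $G$, we have $|M| \leq \mu(G[E \setminus F]) \leq \mu(G) = |M|$, so $M$ remains a maximum matching in $G[E \setminus F]$. Hence $G[E \setminus F]$ is {\Konig} if and only if it admits a vertex cover $S$ of size $|M|$. By Observation~\ref{obs:konig_mm}, any such $S$ must contain exactly one endpoint of each edge $e \in M$ and no vertex left unsaturated by $M$. This motivates introducing, for every edge $e = \{u_e, v_e\} \in M$, a boolean variable $x_e$ whose value determines whether $u_e$ or $v_e$ belongs to $S$. Because $M$ is maximum, no edge of $E \setminus M$ can have both endpoints unsaturated by $M$, so every non-matching edge has at least one endpoint corresponding to some variable.

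Given the variables, each non-matching edge $\{a,b\} \in E \setminus M$ yields one clause expressing "$a \in S$ or $b \in S$": if both $a$ and $b$ are saturated by $M$, say by matching edges $e_a$ and $e_b$, this is the 2-clause obtained by selecting the literal from $x_{e_a}$ that places $a$ in $S$ and the literal from $x_{e_b}$ that places $b$ in $S$; if $a$ is unsaturated by $M$, this is the unit clause on $x_{e_b}$ forcing $b$ into $S$. Unit clauses can be folded into 2-clauses by the same auxiliary-variable gadget used at the start of the proof of Theorem~\ref{thm:ked-disjoint-from-mm-nphard} (introducing a forced-false variable via $k+1$ pairs of complementary clauses), so the result is a bona fide 2-CNF formula $\Phi$. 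I would then claim that $(G, M, k)$ is a YES instance of {\Konig} Edge Deletion disjoint from Matching if and only if $(\Phi, k)$ is a YES instance of {\sc Almost-2-SAT}. For the forward direction, given $F$ of size at most $k$, a minimum vertex cover $S$ of $G[E \setminus F]$ has size $|M|$ and thus, by Observation~\ref{obs:konig_mm}, defines an assignment satisfying every clause except those arising from edges in $F$. For the reverse direction, given clauses $D$ of size at most $k$ whose removal makes $\Phi$ satisfiable, the corresponding set $F \subseteq E \setminus M$ has size at most $k$, and the satisfying assignment yields a set $S$ that is a vertex cover of $G[E \setminus F]$ with $|S| = |M|$, while $M$ itself is a matching across $(S, V \setminus S)$ saturating $S$, so Lemma~\ref{lem:konig_partition} delivers {\Konig}-ness.

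The main obstacle is getting the case analysis on matched/unmatched endpoints precisely right and ensuring that the chosen $S$ is a proper vertex cover rather than merely "locally valid" for each matching edge. Care is needed so that the resulting clauses correctly forbid $S$ from containing any $M$-unsaturated vertex and that $M$ certifies {\Konig}-ness via Lemma~\ref{lem:konig_partition}. Once the equivalence is established, the FPT algorithm of~\cite{LNRRS14} running in $O^{\ast}(2.31^k)$ time on the 2-CNF formula (which is polynomially bounded in $|V(G)| + |E(G)|$) produces a solution in $O^{\ast}(2.31^k)$ time for the original instance, which can be converted into an FPT algorithm for the minimization version by iterating over $k = 0, 1, 2, \ldots$.
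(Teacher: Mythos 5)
Your proposal is correct and follows essentially the same route as the paper: a parameter-preserving reduction to {\sc Almost-2-SAT} in which non-matching edges become deletable clauses, the structure forced by Observation~\ref{obs:konig_mm} (exactly one endpoint of each matching edge in the cover, no $M$-unsaturated vertex in the cover) is hard-wired into the formula, and the known $O^{\ast}(2.31^k)$ algorithm is then invoked. The only difference is cosmetic: the paper uses one variable per vertex and enforces those constraints via $(k+1)$-fold duplicated clauses $(\neg u \lor \neg v)$ and $\neg v$, whereas you use one variable per matching edge, which builds the same constraints into the encoding directly.
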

\begin{proof}
We will reduce the problem to {\sc Almost-2-SAT}. Given a graph $G$ with edge set $E$, and a maximum matching $M$ of $G$, for every edge $(u,v) \in G$, we create a clause $(u \lor v)$ to assert that one of $u$ or $v$ must be picked. For every edge $(u,v) \in M$, we create a clause $(\neg u \lor \neg v)$ to assert that both endpoints of this edge can not be picked. For every vertex $v$ that is unsaturated, we add the singleton clause $\neg v$ to assert that unsaturated vertices can not be picked. That is, we create the following 2-SAT instance:

\[\bigwedge\limits_{(u,v)\in M} (\neg u \lor \neg v) \bigwedge\limits_{(u,v)\in E} (u \lor v) \bigwedge\limits_{v \in U} \neg v\]

where $U$ is the set of vertices that are unsaturated by $M$. Furthermore, for every $(u,v) \in M$, we make $k+1$ duplicates of $(u \lor v)$ and $(\neg u \lor \neg v)$. We also make $k+1$ duplicates of the clauses of the form $\neg v$ (corresponding to unsaturated vertices).  We refer to the set of clauses as $C$. $C$ will be the input to the {\sc Almost-2-SAT} algorithm with parameter $k$.

\begin{claim}
\label{claim:if-yes-then-ked}
If the formula is a {\sf YES} instance then there exists a {\Konig} edge deletion set disjoint from $M$ of size $k$.
\end{claim}
\begin{proof}
Let $D$ be the set of deleted clauses. Any clause in $D$ which corresponds to an edge in $M$ occurs $k+1$ times in $C$, and since $|D| \leq k$, there will be at least one copy of the clause remaining in $C \setminus D$. This means that the removal of such clauses does not effect the satisfiability of $C \setminus D$ and hence we can assume that $D$ does not contain any clause corresponding to edges of $M$. The same argument can be used for clauses of the form $\neg v$. Hence we will assume that $D$ contains only clauses of the form $(u \lor v)$ where $(u,v) \in E \setminus M$.

For every deleted clause of the form $(u \lor v)$, $(u,v) \in E \setminus M$, we will remove the edge $(u,v)$ from $G$. Let the resulting graph be $G'$, we claim that $G'$ is a {\Konig} Graph. To prove this we provide a vertex cover $S$ of $G'$ such that every edge in $M$ is covered exactly once, that is, no edge has both endpoints in the vertex cover. Since the SAT instance $C \setminus D$ is satisfiable, there exists a satisfying assignment $f : V \to \{0,1\}$. We set the vertex cover $S$ to be the subset of variables that are set to true in $f$. This is indeed a vertex cover of $G'$ since for every edge $(u,v) \in E \setminus F$, we have added a clause $(u \lor v)$ that was not deleted by $D$. Hence, the clause was satisfied by $f$ and one of either $u$ or $v$ was set to true. Therefore, one endpoint of $(u,v)$ would be in $S$. Moreover, for every edge $(u,v)$ in $M$, we added a clause $(\neg u \lor \neg v)$ that is not deleted by $D$, this means both $u$ and $v$ can not be set to true in $f$ and hence both endpoints of $(u,v)$ are not in $S$. For every edge in $M$ exactly one of its endpoints is in $S$ and hence $|M| = |S|$ and hence $G'$ is {\Konig}.
%
\end{proof}

\begin{claim}
\label{claim:if-ked-them-yes}
If there exists a {\Konig} edge deletion set disjoint from $M$ of size $k$ then the formula is a {\sf YES} instance.
\end{claim}
\begin{proof}
Let $F$ be such a {\Konig} edge deletion set. For every edge $(u,v) \in F$, we will delete the clause $(u \lor v)$ from $C$ to obtain $C'$. We claim that $C'$ is satisfiable. The graph $G[E \setminus F]$ is {\Konig} and hence it has a vertex cover $S$ such that for every edge in $M$ both endpoints are not in $S$. We will construct an assignment $f$ as follows: For every vertex in $v \in G[E\setminus F]$ we set $f(v) = 1$ if $v \in S$ and $f(v) = 0$ otherwise. Since $S$ is a vertex cover, all clauses of the form $(u \lor v)$ are satisfied as either $u$ or $v$ will be set to true. Since, by Observation \ref{obs:konig_mm}, for every matching edge $(u,v) \in M$ there is exactly one endpoint in $S$, all clauses of the form $(\neg u \lor \neg v)$ will be satisfied as at least one of $u$ or $v$ will be set to false. Moreover, by Observation \ref{obs:konig_mm}, $S$ does not contain any unsaturated vertices, so all clauses of the form $\neg v$ will also be satisfied.
\end{proof}
\noindent
Combining Claims~\ref{claim:if-yes-then-ked} and~\ref{claim:if-ked-them-yes} and the fixed-parameter tractability of the {\sc Almost-2-SAT} problem, we have the proof of Theorem~\ref{thm:ked-disjoint-from-matching-fpt}.
\end{proof}

The above reduction also preserves the parameter $k$ and along with Corollary \ref{cor:ppr}, we have the following corollary.

\begin{corollary}
	{\sc Almost-2-SAT} and {\Konig} Edge Deletion disjoint from Matching are fixed-parameter equivalent problems.
\end{corollary}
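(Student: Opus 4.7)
The plan is to observe that we have already, in effect, established parameter-preserving reductions in both directions, so the corollary follows immediately by stringing them together. Fixed-parameter equivalence of two parameterized problems means that each has a parameterized (indeed, here parameter-preserving) reduction to the other, so FPT algorithms transfer back and forth up to polynomial overhead.

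First, I would invoke Corollary~\ref{cor:ppr}, which was obtained directly from the reduction in the proof of Theorem~\ref{thm:ked-disjoint-from-mm-nphard}: given an {\sc Almost-2-SAT} instance with parameter $k$, that reduction builds in polynomial time a {\Konig} Edge Deletion disjoint from Matching instance $(G',M,k)$ with the same parameter $k$, and the two {\sf YES}/{\sf NO} equivalences proved in that subsection are exactly what is needed. This gives the first direction of the equivalence.

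For the other direction, I would point to the reduction in the proof of Theorem~\ref{thm:ked-disjoint-from-matching-fpt}: given $(G,M,k)$ for {\Konig} Edge Deletion disjoint from Matching, we construct in polynomial time a 2-CNF formula $C$ whose clauses are the edge/matching/unsaturation clauses (with $k+1$ duplicates of the forced ones), and we feed it to {\sc Almost-2-SAT} with the same parameter $k$. Claims~\ref{claim:if-yes-then-ked} and~\ref{claim:if-ked-them-yes} already establish that this map is an equivalence of instances while keeping the parameter unchanged, hence it is itself a parameter-preserving reduction from {\Konig} Edge Deletion disjoint from Matching to {\sc Almost-2-SAT}.

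Combining these two parameter-preserving reductions yields the corollary. There is no real obstacle here: the whole content of the proof is the observation that both earlier reductions were already parameter-preserving (they send parameter $k$ to parameter $k$), so nothing beyond citing them and noting the symmetry is required; the corollary is essentially bookkeeping on top of Theorems~\ref{thm:ked-disjoint-from-mm-nphard} and~\ref{thm:ked-disjoint-from-matching-fpt}.
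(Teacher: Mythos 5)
Your argument matches the paper's exactly: one direction is Corollary~\ref{cor:ppr}, and the other is the observation that the reduction in the proof of Theorem~\ref{thm:ked-disjoint-from-matching-fpt} also maps parameter $k$ to parameter $k$, so the two parameter-preserving reductions together give the equivalence. This is precisely how the paper derives the corollary; nothing is missing.
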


\section{Conclusion}
In most cases, edge deletion problems tend to be easier (and admit faster FPT algorithms) than the corresponding vertex versions.
For example, covering cycles by deleting minimum number of edges (feedback edge set problem) in undirected graphs is polynomial time solvable, while the related feedback vertex set problem is {\NPComp}.
Also, covering odd cycles by deleting edges in undirected graphs admits a more efficient FPT algorithm~\cite{PPW2018algo} although both the problems are {\NPComp}. Covering cycles by minimum number of arcs (feedback arc set) in tournaments has a sub-exponential algorithm~\cite{AlonLS09} while the vertex version of the problem (feedback vertex set) in tournaments is unlikely to admit sub-exponential algorithms under Exponential Time Hypothesis~\cite{KumarL16,DomGHNT10,IPZ01}.

Surprisingly we have shown here that that {\Konig} Edge Deletion problem is {\Wonehard} while {\Konig} vertex deletion problem was known to be FPT.
From the variant of the problem that is fixed-parameter-tractable, it appears that the W[1]-hardness of {\Konig} Edge Deletion lies in the case when edges in the maximum matching of the graph can also be deleted. 

For the variant where the edges of a given maximum matching can not be deleted, we have shown it to be equivalent to the
the {\sc Almost-2-SAT} problem by giving a parameter preserving reduction. This puts that variation of the problem, equivalent to 
other popular parameterized problems including {\sc Above Guarantee Vertex Cover}, {\sc Odd Cycle Traversal} and {\sc Edge Bipartization}~\cite{RS09} and so the results on parameterized complexity (including kernelization), approximation and exact exponential algorithms for these problems~\cite{LNRRS14} carry over for this problem as well. 

We conclude with an open problem. A graph is said to be {\it stable} if the size of its maximum matching is equal to the size of the maximum fractional matching (i.e. the optimum value of the maximum matching linear program).
It is known that finding a minimum number of vertices whose removal results in a stable graph is solvable in polynomial time, but  the related edge version is {\NPComp}~\cite{bock2015finding}. Is the problem of deleting at most $k$ edges to obtain a stable graph fixed-parameter tractable? Note that the Edge Induced Stable Subgraph problem (in which one needs to find a set of $k$ edges that induces a stable subgraph) is fixed-parameter-tractable as the lower bounds for the size of an edge induced {\Konig} subgraph 
~\cite{konigkernel,MRSSS11} also apply to stable graphs, as {\Konig} graphs are stable.



\begin{thebibliography}{10}

\bibitem{AlonLS09}
Noga Alon, Daniel Lokshtanov, and Saket Saurabh.
\newblock Fast {FAST}.
\newblock In {\em Automata, Languages and Programming, 36th International
  Colloquium, {ICALP} 2009, Rhodes, Greece, July 5-12, 2009, Proceedings, Part
  {I}}, pages 49--58, 2009.

\bibitem{balas1981integer}
Egon Balas.
\newblock Integer and fractional matchings.
\newblock In {\em North-Holland Mathematics Studies}, volume~59, pages 1--13.
  Elsevier, 1981.

\bibitem{bock2015finding}
Adrian Bock, Karthekeyan Chandrasekaran, Jochen K{\"o}nemann, Britta Peis, and
  Laura Sanit{\`a}.
\newblock Finding small stabilizers for unstable graphs.
\newblock {\em Mathematical Programming}, 154(1-2):173--196, 2015.

\bibitem{cai1996fixed}
Leizhen Cai.
\newblock Fixed-parameter tractability of graph modification problems for
  hereditary properties.
\newblock {\em Information Processing Letters}, 58(4):171--176, 1996.

\bibitem{fpt_school}
Marek Cygan, Fedor~V. Fomin, Bart M.~P. Jansen, Lukasz Kowalik, Daniel
  Lokshtanov, Marcin Pilipczuk, Michal Pilipczuk, and Saket Saurabh.
\newblock {Open Problems for FPT School}.
\newblock {\em http://fptschool.mimuw.edu.pl/opl.pdf}, 2014.

\bibitem{cygan2015parameterized}
Marek Cygan, Fedor~V Fomin, {\L}ukasz Kowalik, Daniel Lokshtanov, D{\'a}niel
  Marx, Marcin Pilipczuk, Micha{\l} Pilipczuk, and Saket Saurabh.
\newblock {\em Parameterized algorithms}, volume~3.
\newblock Springer, 2015.

\bibitem{Deming79}
Robert~W. Deming.
\newblock {Independence numbers of graphs-an extension of the
  K{\"{o}}nig-Egerv{\'{a}}ry theorem}.
\newblock {\em Discrete Mathematics}, 27(1):23--33, 1979.

\bibitem{DomGHNT10}
Michael Dom, Jiong Guo, Falk H{\"{u}}ffner, Rolf Niedermeier, and Anke
  Tru{\ss}.
\newblock Fixed-parameter tractability results for feedback set problems in
  tournaments.
\newblock {\em J. Discrete Algorithms}, 8(1):76--86, 2010.

\bibitem{konigkernel}
Qilong Feng, Guanlan Tan, Senmin Zhu, Bin Fu, and Jianxin Wang.
\newblock New algorithms for edge induced k{\"{o}}nig-egerv{\'{a}}ry subgraph
  based on gallai-edmonds decomposition.
\newblock In {\em 29th International Symposium on Algorithms and Computation,
  {ISAAC} 2018, December 16-19, Jiaoxi, Yilan, Taiwan}, pages
  31:1--31:12, 2018.

\bibitem{garey2002computers}
Michael~R Garey and David~S Johnson.
\newblock {\em Computers and intractability}, volume~29.
\newblock wh freeman New York, 2002.

\bibitem{IPZ01}
Russell Impagliazzo, Ramamohan Paturi, and Francis Zane.
\newblock Which problems have strongly exponential complexity?
\newblock {\em J. Comput. Syst. Sci.}, 63(4):512--530, 2001.

\bibitem{KhotR02}
Subhash Khot and Venkatesh Raman.
\newblock Parameterized complexity of finding subgraphs with hereditary
  properties.
\newblock {\em Theor. Comput. Sci.}, 289(2):997--1008, 2002.

\bibitem{KPRR14}
Stefan Kratsch, Marcin Pilipczuk, Ashutosh Rai, and Venkatesh Raman.
\newblock {Kernel Lower Bounds using Co-Nondeterminism: Finding Induced
  Hereditary Subgraphs}.
\newblock {\em {TOCT}}, 7(1):4:1--4:18, 2014.

\bibitem{KumarL16}
Mithilesh Kumar and Daniel Lokshtanov.
\newblock {Faster Exact and Parameterized Algorithm for Feedback Vertex Set in
  Tournaments}.
\newblock In {\em 33rd Symposium on Theoretical Aspects of Computer Science,
  {STACS} 2016, February 17-20, 2016, Orl{\'{e}}ans, France}, pages
  49:1--49:13, 2016.

\bibitem{LewisY80}
John~M. Lewis and Mihalis Yannakakis.
\newblock The node-deletion problem for hereditary properties is np-complete.
\newblock {\em J. Comput. Syst. Sci.}, 20(2):219--230, 1980.

\bibitem{lokshtanov2008wheel}
Daniel Lokshtanov.
\newblock Wheel-free deletion is w[2]-hard.
\newblock In {\em International Workshop on Parameterized and Exact
  Computation}, pages 141--147. Springer, 2008.

\bibitem{LNRRS14}
Daniel Lokshtanov, N.~S. Narayanaswamy, Venkatesh Raman, M.~S. Ramanujan, and
  Saket Saurabh.
\newblock {Faster Parameterized Algorithms Using Linear Programming}.
\newblock {\em {ACM} Trans. Algorithms}, 11(2):15:1--15:31, 2014.

\bibitem{lovasz2009matching}
L{\'a}szl{\'o} Lov{\'a}sz and Michael~D Plummer.
\newblock {\em Matching theory}, volume 367.
\newblock American Mathematical Soc., 2009.

\bibitem{MRSSS11}
Sounaka Mishra, Venkatesh Raman, Saket Saurabh, Somnath Sikdar, and C.~R.
  Subramanian.
\newblock {The Complexity of K{\"{o}}nig Subgraph Problems and Above-Guarantee
  Vertex Cover}.
\newblock {\em Algorithmica}, 61(4):857--881, 2011.

\bibitem{NT75}
George~L. Nemhauser and Leslie E.~Trotter Jr.
\newblock {Vertex packings: Structural properties and algorithms}.
\newblock {\em Mathematical Programming}, 8(1):232--248, 1975.

\bibitem{nemhauser1974properties}
George~L Nemhauser and Leslie~E Trotter.
\newblock Properties of vertex packing and independence system polyhedra.
\newblock {\em Mathematical Programming}, 6(1):48--61, 1974.

\bibitem{picard1977integer}
Jean-Claude Picard and Maurice Queyranne.
\newblock On the integer-valued variables in the linear vertex packing problem.
\newblock {\em Mathematical programming}, 12(1):97--101, 1977.

\bibitem{PPW2018algo}
Marcin Pilipczuk, Michal Pilipczuk, and Marcin Wrochna.
\newblock {Edge Bipartization Faster than 2{\^{}}k}.
\newblock {\em Algorithmica}, pages 1--50, 2017.

\bibitem{pulleyblank1987fractional}
William~R Pulleyblank.
\newblock Fractional matchings and the edmonds-gallai theorem.
\newblock {\em Discrete applied mathematics}, 16(1):51--58, 1987.

\bibitem{RS09}
Igor Razgon and Barry O'Sullivan.
\newblock Almost 2-{SAT} is fixed-parameter tractable.
\newblock {\em J. Comput. Syst. Sci.}, 75(8):435--450, 2009.

\bibitem{Stersoul1979ACO}
F.~Stersoul.
\newblock A characterization of the graphs in which the transversal number
  equals the matching number.
\newblock {\em J. Comb. Theory, Ser. B}, 27:228--229, 1979.

\bibitem{trotter1973solution}
Leslie~Earl Trotter, Jr.
\newblock {\em {Solution Characteristics and Algorithms for the Vertex Packing
  Problem.}}
\newblock PhD thesis, Ithaca, NY, USA, 1973.

\end{thebibliography}
\end{document}